\def\RN{ \mathbb{R} }                               
\newcommand{\EE}{{\mathbb E}}
\newcommand{\PP}{{\mathbb P}}
\newcommand{\Cov}{{\mathbb{C}\rm{ov}}}
\newcommand{\Tr}{{\rm{Tr}}}
\newcommand{\bA}{\boldsymbol{A}}
\newcommand{\bB}{\boldsymbol{B}}
\newcommand{\bF}{\boldsymbol{F}}
\newcommand{\bH}{\boldsymbol{H}}
\newcommand{\bP}{\boldsymbol{P}}
\newcommand{\bK}{\boldsymbol{K}}
\newcommand{\bL}{\boldsymbol{L}}
\newcommand{\bI}{\boldsymbol{I}}
\newcommand{\A}{\mathcal{A}}
\newcommand{\Y}{\mathcal{Y}}
\newcommand{\N}{\mathcal{N}}
\newcommand{\bG}{\boldsymbol{G}}
\newcommand{\bGa}{\boldsymbol{\Gamma}}
\newcommand{\bPsi}{\boldsymbol{\Psi}}
\newcommand{\eg}{{\em e.g., }}
\newtheorem{theorem}{Theorem}
\newtheorem{lemma}{Lemma}
\begin{document}

\title{Partially Linear Estimation with Application to Sparse Signal Recovery From Measurement Pairs}

\author{Tomer Michaeli, Daniel Sigalov
and Yonina C. Eldar,~\IEEEmembership{Senior~Member,~IEEE} %
\thanks{This work was supported in part by the Israel Science Foundation under
Grant no. 1081/07, by the European Commission in the framework
of the FP7 Network of Excellence in Wireless COMmunications NEWCOM++
(contract no. 216715) and by a Google Research Award.%
} %
\thanks{T.~Michaeli and Y.~C.~Eldar are with the department of electrical engineering and D.~Sigalov is with the interdepartmental program for applied mathematics, Technion--Israel Institute of Technology, Haifa, Israel 32000 (phone: +972-4-8294798, +972-4-8294682, fax: +972-4-8295757, e-mail: tomermic@tx.technion.ac.il, dansigal@tx.technion.ac.il, yonina@ee.technion.ac.il). Y.~C.~Eldar is also a Visiting Professor at Stanford, CA USA.%
}}
\maketitle
\begin{abstract}
We address the problem of estimating a random vector $X$ from two sets of measurements $Y$ and $Z$, such that the estimator is linear in $Y$. We show that the partially linear minimum mean squared error (PLMMSE) estimator does not require knowing the joint distribution of $X$ and $Y$ in full, but rather only its second-order moments. This renders it of potential interest in various applications. We further show that the PLMMSE method is minimax-optimal among all estimators that solely depend on the second-order statistics of $X$ and $Y$. We demonstrate our approach in the context of recovering a signal, which is sparse in a unitary dictionary, from noisy observations of it and of a filtered version of it. We show that in this setting PLMMSE estimation has a clear computational advantage, while its performance is comparable to state-of-the-art algorithms. We apply our approach both in static and dynamic estimation applications. In the former category, we treat the problem of image enhancement from blurred/noisy image pairs, where we show that PLMMSE estimation performs only slightly worse than state-of-the art algorithms, while running an order of magnitude faster. In the dynamic setting, we provide a recursive implementation of the estimator and demonstrate its utility in the context of tracking maneuvering targets from position and acceleration measurements.
\end{abstract}

\begin{IEEEkeywords}
Bayesian estimation, minimum mean squared error, linear estimation.
\end{IEEEkeywords}

\section{Introduction}
\label{sec:introduction}
Bayesian estimation is concerned with the prediction of a random quantity $X$ based on a set of observations $Y$, which are statistically related to $X$. It is well known that the estimator minimizing the mean squared error (MSE) is given by the conditional expectation $\hat{X}=\EE[X|Y]$. There are various scenarios, however, in which the minimal MSE (MMSE) estimator cannot be used. This can either be due to implementation constraints, because of the fact that no closed form expression for $\EE[X|Y]$ exists, or due to lack of complete knowledge of the joint distribution of $X$ and $Y$. In these cases, one often resorts to linear estimation. The appeal of the linear MMSE (LMMSE) estimator is rooted in the fact that it possesses an easily implementable closed form expression, which merely requires knowledge of the joint first- and second-order moments of $X$ and $Y$.

For example, the amount of computation required for calculating the MMSE estimate of a jump-Markov Gaussian random process from its noisy version grows exponentially in time \cite{BB88}. By contrast, the LMMSE estimator in this setting possesses a simple recursive implementation, similar to the Kalman filter \cite{C94}. A similar problem arises in the area of sparse representations, in which the use of sparsity-inducing Gaussian mixture priors and of Laplacian priors is very common. The complexity of calculating the MMSE estimator under the former prior is exponential in the vector's dimension, calling for approximate solutions \cite{SPZ08}. The MMSE estimator under the latter prior does not possess a closed form expression \cite{G01}, which has motivated the use of alternative estimation strategies such as the maximum a-posteriori (MAP) method.

In practical situations, the reasons for not using the MMSE estimator may only apply to a subset of the measurements. In these cases, it may be desirable to construct an estimator that is linear in part of the measurements and nonlinear in the rest. Partially linear estimation was studied in the statistical literature in the context of regression \cite{HL07}. In this line of research, it is assumed that the conditional expectation $g(y,z)=\EE[X|Y=y,Z=z]$ is linear in $y$. The goal, then, is to approximate $g(x,y)$ from a set of examples $\{x_i,y_i,z_i\}$ drawn independently from the joint distribution of $X$, $Y$ and $Z$. In this paper, our goal is to derive the separable partially linear MMSE (PLMMSE) estimator. Namely, we do not make any assumptions on the structure of the MMSE estimate $\EE[X|Y,Z]$, but rather look for the estimator that minimizes the MSE among all functions $g(x,y)$ of the form $\bA y + b(z)$.


We show that in certain sparse approximation scenarios, the PLMMSE solution may be computed much more efficiently than the MMSE estimator. We demonstrate the usefulness of the sparse PLMMSE both in static and in dynamic estimation settings. In the static case, we apply our method to the problem of image deblurring from blurred/noisy image pairs \cite{YSQS07}. Here, we show that PLMMSE estimation performs only slightly worse than state-of-the art methods, but is much faster. In the dynamic regime, we provide a recursive implementation of the PLMMSE solution and demonstrate its usefulness in tracking a maneuvering target from position and acceleration measurements. We show the advantage of PLMMSE filtering over state-of-the-art algorithms when the measurements are prone to faults or contain outliers.

The paper is organized as follows. In Section~\ref{sec:PLE} we present the PLMMSE estimator and discuss some of its properties. In Section~\ref{sec:ParKno}, we show that the PLMMSE method is optimal in a minimax sense among all estimators that solely rely on the second-order statistics of $X$ and $Y$. In Section~\ref{sec:simulation} we derive the PLMMSE estimator for recovering a signal, sparse in a unitary dictionary, from a pair of observations, one blurred and one noisy. In Section~\ref{sec:simulationDeblur} we apply our method to the problem of image enhancement from blurred/noisy measurement pairs. In Section~\ref{sec:tracking} we apply PLMMSE estimation to tracking maneuvering targets.

\section{Partially Linear Estimation}
\label{sec:PLE}
We denote random variables (RVs) by capital letters. The pseudo-inverse of a matrix $\bA$ is denoted by $\bA^\dag$. The mean $\EE[X]$ of an RV $X$ is denoted $\mu_X$ and the auto-covariance matrix $\Cov(X)=\EE[(X-\mu_X)(X-\mu_X)^T]$ of $X$ is denoted $\bGa_{XX}$. Similarly, $\bGa_{XY}$ stands for the cross-covariance matrix $\Cov(X,Y)=\EE[(X-\mu_X)(Y-\mu_Y)^T]$ of two RVs $X$ and $Y$. The joint cumulative distribution function of $X$ and $Y$ is written as $F_{XY}(x,y)=\PP(X\leq x,Y\leq y)$, where the inequalities are element-wise. By definition, the marginal distribution of $X$ is $F_{X}(x)=F_{XY}(x,\infty)$. In our setting, $X$ is the quantity to be estimated and $Y$ and $Z$ are two sets of measurements thereof. The RVs $X$, $Y$ and $Z$ take values in $\RN^M$, $\RN^N$ and $\RN^Q$, respectively. The MSE of an estimator $\hat{X}$ of $X$ is defined as $\EE[\|X-\hat{X}\|^2]$.

We begin by considering the most general form of a partially linear estimator of $X$ based on $Y$ and $Z$, which is given by
\begin{equation}\label{eq:PLmodel1}
\hat{X} = \bA(Z) Y + b(Z).
\end{equation}
Here $\bA(z)$ is a matrix-valued function and $b(z)$ is a vector-valued function, so that the realization $z$ of $Z$ is used to choose one of a family of linear estimators of $x$ based on $y$.
\begin{theorem}\label{thm:PL}
Consider estimators of $X$ having the form \eqref{eq:PLmodel1}, for some (Borel measurable) functions $\bA:\RN^Q\rightarrow\RN^{M\times N}$ and $b:\RN^Q\rightarrow\RN^M$. Then the estimator minimizing the MSE within this class is given by
\begin{equation}\label{eq:XhatModel1}
\hat{X} = \bGa_{XY|Z}\bGa_{YY|Z}^\dag (Y - \EE[Y|Z]) + \EE[X|Z],
\end{equation}
where $\bGa_{XY|Z}=\EE[(X-\EE[X|Z])(Y-\EE[Y|Z])^T|Z]$ denotes the cross-covariance of $X$ and $Y$ given $Z$ and $\bGa_{YY|Z}=\EE[(Y-\EE[Y|Z])(Y-\EE[Y|Z])^T|Z]$ is the auto-covariance of $Y$ given $Z$.
\end{theorem}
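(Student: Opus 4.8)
The plan is to exploit the tower property of conditional expectation to decouple the problem across the realizations of $Z$. Writing the MSE as $\EE[\|X-\hat{X}\|^2] = \EE_Z\big[\EE[\,\|X-\bA(Z)Y-b(Z)\|^2 \mid Z]\big]$, I note that the inner conditional expectation depends on the functions $\bA(\cdot)$ and $b(\cdot)$ only through their values $\bA(z)$ and $b(z)$ at the conditioning point. Since the integrand is nonnegative and the functions $\bA$ and $b$ are otherwise unconstrained, it suffices to minimize the inner conditional MSE separately for (almost) every fixed $z$, and then collect the pointwise minimizers into a pair of functions.

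For a fixed $z$, the task is to choose a matrix $\bA$ and a vector $b$ minimizing $\EE[\,\|X-\bA Y-b\|^2 \mid Z=z]$. This is precisely the classical affine LMMSE problem, but with every unconditional moment replaced by its counterpart conditioned on $Z=z$. First I would set the gradient with respect to $b$ to zero, obtaining $b=\EE[X\mid Z=z]-\bA\,\EE[Y\mid Z=z]$; substituting this back centers both $X$ and $Y$ about their conditional means. Differentiating with respect to $\bA$ and invoking the orthogonality principle then yields the conditional normal equation $\bGa_{XY|Z}=\bA\,\bGa_{YY|Z}$, whose solution is $\bA=\bGa_{XY|Z}\bGa_{YY|Z}^\dag$. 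Re-assembling $\bA(z)$ and $b(z)$ reproduces \eqref{eq:XhatModel1}.

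Two technical points need care. When $\bGa_{YY|Z}$ is singular, the normal equation must still be consistent for the pseudo-inverse to furnish a genuine minimizer; I would verify this by showing that any $v$ in the null space of $\bGa_{YY|Z}$ satisfies $\mathrm{Var}(v^T Y\mid Z=z)=0$, so that $v^T(Y-\EE[Y\mid Z=z])=0$ almost surely and hence $\bGa_{XY|Z}v=0$. This places the range of $\bGa_{XY|Z}^T$ inside the range of $\bGa_{YY|Z}$, which is exactly the condition guaranteeing $\bGa_{XY|Z}\bGa_{YY|Z}^\dag\bGa_{YY|Z}=\bGa_{XY|Z}$ and thus optimality of the pseudo-inverse solution. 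Second, I must confirm that the pointwise minimizers assemble into admissible Borel measurable functions $\bA(\cdot)$ and $b(\cdot)$, which follows because the conditional moments are measurable in $z$ and matrix pseudo-inversion is a Borel operation.

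I expect the main obstacle to be not the algebra of the LMMSE step---which is routine once conditioning is in place---but rather the rigorous justification that pointwise minimization of the inner conditional MSE yields a global minimizer over the whole admissible function class, together with the attendant measurability bookkeeping. Making the interchange of minimization and outer expectation airtight, and handling the rank-deficient conditional covariance uniformly in $z$, are the steps I would treat most carefully.
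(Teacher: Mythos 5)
Your proposal is correct and follows essentially the same route as the paper's proof: apply the smoothing (tower) property to write the MSE as $\EE\left[\EE\left[\|X-\bA(Z)Y-b(Z)\|^2\,|\,Z\right]\right]$, minimize the inner expectation pointwise in $z$, and recognize that minimization as the classical LMMSE problem under the conditional distribution of $(X,Y)$ given $Z$. The paper states this in three sentences and omits the details you work out---the conditional normal equations, the range condition ensuring the pseudo-inverse yields a genuine minimizer when $\bGa_{YY|Z}$ is singular, and the measurability of the assembled $\bA(\cdot),b(\cdot)$---so your version is simply a more careful rendering of the same argument.
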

\begin{proof}
See Appendix~\ref{sec:ProofThmPL}.
\end{proof}
Note that \eqref{eq:XhatModel1} is indeed of the form of \eqref{eq:PLmodel1} with $\bA(Z)=\bGa_{XY|Z}\bGa_{YY|Z}^\dag$ and $b(Z)=\EE[X|Z]-\bGa_{XY|Z}\bGa_{YY|Z}^\dag\EE[Y|Z]$. As can be seen, although the MMSE solution among the class of estimators \eqref{eq:PLmodel1} has a simple form, it requires knowing the conditional covariance $\bGa_{XY|Z}$, which limits its applicability. In particular, this solution cannot be applied in cases where we merely know the unconditional covariance $\bGa_{XY}$.

To relax this restriction, we next consider \emph{separable} partially linear estimation. Namely, we seek to minimize the MSE among all functions of the form
\begin{equation}\label{eq:PLmodel2}
\hat{X} = \bA Y + b(Z),
\end{equation}
where $\bA$ is a deterministic matrix and $b(z)$ is a vector-valued function.
\begin{theorem}
\label{thm:SPL}
Consider estimators of $X$ having the form \eqref{eq:PLmodel2}, for some matrix $\bA\in\RN^{M\times N}$ and (Borel measurable) function $b:\RN^Q\rightarrow\RN^M$. Then the estimator minimizing the MSE within this class is given by
\begin{equation}\label{eq:Xhat}
\hat{X}^{\rm PL} = \bGa_{X\tilde{Y}}\bGa_{\tilde{Y}\tilde{Y}}^\dag \tilde{Y} + \EE[X|Z],
\end{equation}
where
\begin{equation}\label{eq:W}
\tilde{Y} = Y - \EE[Y|Z].
\end{equation}
\end{theorem}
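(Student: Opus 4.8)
The plan is to exploit the separable structure of the cost by minimizing over the nonlinear part $b$ and the linear part $\bA$ in two nested stages, writing $\min_{\bA,b}\EE[\|X-\bA Y-b(Z)\|^2]=\min_{\bA}\min_{b}\EE[\|(X-\bA Y)-b(Z)\|^2]$. For a fixed matrix $\bA$, the inner problem is to estimate the random vector $X-\bA Y$ from $Z$ by an arbitrary measurable function $b(Z)$, and the MMSE property of the conditional expectation immediately gives the optimal choice $b^\star(Z)=\EE[X-\bA Y\mid Z]=\EE[X\mid Z]-\bA\,\EE[Y\mid Z]$.

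First I would substitute $b^\star$ back into the cost. The residual collapses to a centered quantity, $X-\bA Y-b^\star(Z)=(X-\EE[X\mid Z])-\bA(Y-\EE[Y\mid Z])=\tilde X-\bA\tilde Y$, where $\tilde X=X-\EE[X\mid Z]$ and $\tilde Y$ is as in \eqref{eq:W}. By the tower property both $\tilde X$ and $\tilde Y$ have zero mean, so the remaining outer problem $\min_{\bA}\EE[\|\tilde X-\bA\tilde Y\|^2]$ is an ordinary homogeneous linear least-squares problem. Differentiating with respect to $\bA$, or equivalently invoking the orthogonality condition $\EE[(\tilde X-\bA\tilde Y)\tilde Y^T]=0$, yields the normal equations $\bA\,\bGa_{\tilde Y\tilde Y}=\bGa_{\tilde X\tilde Y}$, whose minimum-norm solution is $\bA=\bGa_{\tilde X\tilde Y}\bGa_{\tilde Y\tilde Y}^\dag$. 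Recombining $\bA\tilde Y+b^\star(Z)$ then reproduces \eqref{eq:Xhat}.

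The one identity that needs care is replacing $\bGa_{\tilde X\tilde Y}$ by $\bGa_{X\tilde Y}$ as it appears in the statement. I would establish $\bGa_{\tilde X\tilde Y}=\bGa_{X\tilde Y}$ by noting that $\EE[\EE[X\mid Z]\,\tilde Y^T]=\EE[\EE[X\mid Z]\,\EE[\tilde Y^T\mid Z]]=0$, again via the tower property together with $\EE[\tilde Y\mid Z]=0$; the centering of $X$ by its unconditional mean likewise drops out since $\EE[\tilde Y]=0$.

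The main obstacle I anticipate is the pseudo-inverse step when $\bGa_{\tilde Y\tilde Y}$ is singular: the normal equations only pin down $\bA$ on the range of $\bGa_{\tilde Y\tilde Y}$, so to conclude that $\bGa_{\tilde X\tilde Y}\bGa_{\tilde Y\tilde Y}^\dag$ is genuinely optimal I must verify the consistency condition $\bGa_{\tilde X\tilde Y}\bGa_{\tilde Y\tilde Y}^\dag\bGa_{\tilde Y\tilde Y}=\bGa_{\tilde X\tilde Y}$. This follows from the standard fact that the row space of a cross-covariance matrix is contained in the range of the corresponding auto-covariance matrix, since any direction in which $\tilde Y$ is degenerate is uncorrelated with $\tilde X$; this guarantees the solution is exact rather than merely a least-squares approximation and renders the choice of generalized inverse immaterial.
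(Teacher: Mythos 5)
Your proof is correct, but it takes a genuinely different route from the paper's. The paper argues by the projection theorem: it notes that the RVs of the form $\bA Y+b(Z)$ form a closed linear subspace of the Hilbert space of finite-second-moment RVs, and then \emph{verifies} that the candidate \eqref{eq:Xhat} has estimation error orthogonal to every $\bA Y+b(Z)$, using the same two ingredients you rely on — the tower property, which makes $\tilde Y$ and $X-\EE[X|Z]$ uncorrelated with every function of $Z$, and the pseudo-inverse identity $\bGa_{X\tilde{Y}}\bGa_{\tilde{Y}\tilde{Y}}^\dag \bGa_{\tilde{Y}\tilde{Y}}=\bGa_{X\tilde{Y}}$, which it imports from a cited lemma. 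You instead \emph{derive} the estimator by nested minimization: eliminate $b$ first via the MMSE property of conditional expectation, collapse the residual to $\tilde X-\bA\tilde Y$ with $\tilde X=X-\EE[X|Z]$, and solve the resulting homogeneous least-squares problem for $\bA$ through the normal equations. Your route buys constructiveness (the form \eqref{eq:Xhat} emerges rather than being checked), and it makes explicit the consistency condition behind the pseudo-inverse — that the null space of $\bGa_{\tilde{Y}\tilde{Y}}$ annihilates $\bGa_{\tilde{X}\tilde{Y}}$ because a.s.-degenerate directions of $\tilde Y$ are uncorrelated with $\tilde X$ — a point the paper delegates to its citation; your identification $\bGa_{\tilde{X}\tilde{Y}}=\bGa_{X\tilde{Y}}$ is also handled correctly. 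What the paper's geometric route buys is an immediate almost-sure uniqueness of the optimum (projection onto a closed subspace) and lighter bookkeeping, since no explicit optimization over $\bA$ needs to be carried out once the orthogonality conditions are checked.
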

\begin{proof}
See Appendix~\ref{sec:ProofThmSPL}.
\end{proof}
Note again that \eqref{eq:Xhat} is of the form of \eqref{eq:PLmodel2} with $\bA=\bGa_{X\tilde{Y}}\bGa_{\tilde{Y}\tilde{Y}}^\dag$ and $b(Z)=\EE[X|Z]-\bGa_{X\tilde{Y}}\bGa_{\tilde{Y}\tilde{Y}}^\dag\EE[Y|Z]$. The major advantage of this solution with respect to the non-separable estimator \eqref{eq:PLmodel1}, is that the only required knowledge regarding the statistical relation between $X$ and $Y$ is of second-order type. Specifically, as we show in Appendix~\ref{sec:DerivationOfXhatExplicit}, \eqref{eq:Xhat} can be equivalently written as
\begin{align}\label{eq:XhatExplicit}
\hat{X}^{\rm PL} =& \left(\!\bGa_{XY}-\bGa_{\hat{X}^{\rm NL}_Z\hat{Y}^{\rm NL}_Z}\!\right)\!\left(\!\bGa_{YY}-\bGa_{\hat{Y}^{\rm NL}_Z\hat{Y}^{\rm NL}_Z}\!\right)^\dag \!\!\left(\!Y-\hat{Y}^{\rm NL}_Z\!\right) \nonumber\\
&+ \hat{X}^{\rm NL}_Z,
\end{align}
where we denoted $\hat{X}^{\rm NL}_Z=\EE[X|Z]$ and $\hat{Y}^{\rm NL}_Z=\EE[Y|Z]$. Therefore, all we need to know in order to be able to compute the separable PLMMSE estimator \eqref{eq:Xhat} is the covariance matrix $\bGa_{XY}$, the conditional expectation $\EE[X|Z]$ and the marginal joint cumulative distribution function $F_{YZ}$ of $Y$ and $Z$. This is illustrated in Fig.~\ref{fig:XYZ}. In fact, as we show in Section~\ref{sec:ParKno}, in addition to being optimal among all partially linear methods, the PLMMSE solution \eqref{eq:Xhat} is also optimal in a minimax sense among all estimation strategies that rely solely on the quantities appearing in Fig.~\ref{fig:XYZ}.

\begin{figure}[t]\centering
\includegraphics[scale=1]{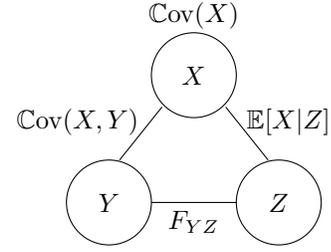}
\caption{The statistical knowledge required for computing the PLMMSE estimator \eqref{eq:Xhat}.}
\label{fig:XYZ}
\end{figure}

The intuition behind \eqref{eq:Xhat} is similar to that arising in dynamic estimation schemes, such as the Kalman filter. Specifically, we begin by constructing the estimate $\EE[X|Z]$ of $X$ based on the measurements $Z$, which minimizes the MSE among all functions of $Z$. Next, we would like to account for $Y$. However, since $Z$ has already been accounted for, we first need to subtract from $Y$ all variations caused by $Z$. This is done by constructing the RV $\tilde{Y}$ of \eqref{eq:W}, which can be thought of as the \emph{innovation} associated with the measurements $Y$ with respect to the initial estimate $\EE[X|Z]$. Finally, since we want an estimate that is partially linear in $Y$, we update our initial estimate with the LMMSE estimate of $X$ based on $\tilde{Y}$.

Before discussing the minimax-optimality of the PLMMSE estimator, it is insightful to examine several special cases, as we do next.

\paragraph{Independent measurements}
Consider first the case in which $Y$ and $Z$ are statistically independent. In this setting, $\tilde{Y}=Y-\mu_Y$ and therefore the PLMMSE estimator \eqref{eq:Xhat} becomes
\begin{align}\label{eq:IndMeas}
\hat{X}^{\rm PL} &= \bGa_{XY}\bGa_{YY}^\dag (Y - \mu_Y) + \EE[X|Z] = \hat{X}^{\rm L}_{Y}+\hat{X}^{\rm NL}_Z - \mu_X,
\end{align}
where $\hat{X}^{\rm L}_{Y}$ denotes the LMMSE estimate of $X$ from $Y$. Thus, in this setting, the PLMMSE estimate reduces to a linear combination of the LMMSE estimate $\hat{X}^{\rm L}_{Y}$ and the MMSE estimate $\hat{X}^{\rm NL}_Z$. The need for subtracting the mean of $X$ arises because both $\hat{X}^{\rm L}_{Y}$ and $\hat{X}^{\rm NL}_Z$ account for it. Indeed, note that $\EE[\hat{X}^{\rm L}_{Y}]=\EE[\hat{X}^{\rm NL}_Z]=\mu_X$, so that without subtraction of $\mu_X$, the estimate $\hat{X}^{\rm PL}$ would be biased, with a mean of $2\mu_X$.

%

\paragraph{$Z$ is independent of $X$ and $Y$}
Suppose next that both $X$ and $Y$ are statistically independent of $Z$. Thus, in addition to the fact that $\tilde{Y}=Y-\mu_Y$, we also have $\EE[X|Z]=\mu_X$. Consequently, the PLMMSE solution \eqref{eq:Xhat} reduces to the LMMSE estimate of $X$ given $Y$:
\begin{align}
\hat{X} &= \bGa_{XY}\bGa_{YY}^\dag (Y - \mu_Y) + \mu_X = \hat{X}^{\rm L}_{Y}.
\end{align}

\paragraph{$Y$ is uncorrelated with $X$ and independent of $Z$}
Consider the situation in which $Y$ and $Z$ are statistically independent and $X$ and $Y$ are uncorrelated. Then $\tilde{Y}=Y-\mu_Y$, and also $\bGa_{X\tilde{Y}}=\bGa_{XY}=0$ so that \eqref{eq:Xhat} becomes the MMSE estimate of $X$ from $Z$:
\begin{align}
\hat{X} &= \EE[X|Z] = \hat{X}^{\rm NL}_{Z}.
\end{align}

\paragraph{$X$ is independent of $Z$}
In situations where $X$ and $Z$ are statistically independent, one may be tempted to conclude that the PLMMSE estimator should not be a function of $Z$. However, this is not necessarily the case. Specifically, although the second term in \eqref{eq:Xhat} becomes the constant $\EE[X|Z]=\mu_X$ in this setting, it is easily verified that $\bGa_{X\tilde{Y}}=\bGa_{XY}$, so that the first term in \eqref{eq:Xhat} does not vanish unless $X$ is uncorrelated with $Y$. As a consequence, the PLMMSE estimator can be written as
\begin{equation}
\hat{X} = \bGa_{XY}\bGa_{\tilde{Y}\tilde{Y}}^\dag Y + \mu_X - \bGa_{XY}\bGa_{\tilde{Y}\tilde{Y}}^\dag \EE[Y|Z],
\end{equation}
in which the last term is a function of $Z$. This should come as no surprise, though, because if, for instance, $Y=X+Z$, then the optimal estimate is $\hat{X}=Y-Z$, even if $X$ and $Z$ are independent. This solution is clearly a function of $Z$.

\paragraph{$X$ is uncorrelated with $Y$}
A similar phenomenon occurs when $X$ and $Y$ are uncorrelated. Indeed in this case, $\bGa_{X\tilde{Y}}=-\bGa_{\hat{X}_Z^{\rm NL} \hat{Y}_Z^{\rm NL}}$, so that the first term in \eqref{eq:Xhat} does not vanish unless $\hat{X}_Z^{\rm NL}$ is uncorrelated with $\hat{Y}_Z^{\rm NL}$. Consequently, the estimator \eqref{eq:Xhat} can be expressed as
\begin{equation}
\hat{X}=-\bGa_{\hat{X}_Z^{\rm NL}\hat{Y}_Z^{\rm NL}}\bGa_{\tilde{Y}\tilde{Y}}^\dag Y + \bGa_{\hat{X}_Z^{\rm NL}\hat{Y}_Z^{\rm NL}}\bGa_{\tilde{Y}\tilde{Y}}^\dag \EE[Y|Z] + \EE[X|Z],
\end{equation}
in which the first term is clearly a linear function of $Y$.

\paragraph{Additive noise}
Perhaps the most widely studied measurement model corresponds to linear distortion and additive noise. Specifically, suppose that
\begin{equation}\label{eq:additiveNoise}
Y=\bH X+U, \quad Z=\bG X+V,
\end{equation}
where $\bH\in\RN^{N\times M}$ and $\bG\in\RN^{Q\times M}$ are given matrices and $U$ and $V$ are zero-mean RVs such that $X$, $U$ and $V$ are mutually independent. As we show in Section~\ref{sec:simulation}, there are situations in which the distribution of $X$ is such that the complexity of computing the MMSE estimator $\EE[X|Y,Z]$ is huge, whereas the complexity of computing $\EE[X|Z]$ is modest. In these cases one may prefer to resort to PLMMSE estimation. This method does not correspond to a convex combination of the LMMSE estimate of $X$ from $Y$ and the MMSE estimate of $X$ from $Z$, as might be suspected. Indeed, substituting $Y=\bH X+U$, we have that $\bGa_{XY}=\bGa_{XX}\bH^T$ and $\bGa_{YY}=\bH\bGa_{XX}\bH^T+\bGa_{UU}$. Furthermore, $\EE[Y|Z]=\bH\EE[X|Z]$, so that $\bGa_{\hat{X}_Z^{\rm NL}\hat{Y}_Z^{\rm NL}}=\bGa_{\hat{X}_Z^{\rm NL}\hat{X}_Z^{\rm NL}}\bH^T$ and $\bGa_{\hat{Y}_Z^{\rm NL}\hat{Y}_Z^{\rm NL}}=\bH\bGa_{\hat{X}_Z^{\rm NL}\hat{X}_Z^{\rm NL}}\bH^T$. Consequently, the PLMMSE estimator \eqref{eq:XhatExplicit} becomes
\begin{equation}\label{eq:PLEadditiveNoise}
\hat{X} = \bA Y + (\bI - \bA\bH)\EE[X|Z],
\end{equation}
where $\bI$ is the identity matrix and $\bA$ is given by
\begin{align}\label{eq:PLEadditiveNoiseA}
\bA &= \left(\bGa_{XX} - \bGa_{\hat{X}_Z^{\rm NL}\hat{X}_Z^{\rm NL}}\right) \bH^T \nonumber\\
 &\hspace{1.5cm}\times \left(\bH\left(\bGa_{XX} - \bGa_{\hat{X}_Z^{\rm NL}\hat{X}_Z^{\rm NL}}\right)\bH^T + \bGa_{UU}\right)^\dag.
\end{align}
We see that, as opposed to a convex combination of $\hat{X}_Z^{\rm NL}$ and $\hat{X}^{\rm L}_{Y}$, the PLMMSE method reduces to a combination of $\hat{X}_Z^{\rm NL}$ and $Y$. Furthermore, the weights of this combination are matrices rather than scalars.

As a toy example demonstrating this, suppose that $X$ is a scalar binary RV taking the values $\pm1$ with equal probability, that $H=G=1$, and that $U\sim\N(0,\sigma_{V}^2)$ and $V\sim\N(0,\sigma_{V}^2)$. It is easily verified that in this case
\begin{equation}
\hat{X}_Z^{\rm NL} = \EE[X|Z] = \frac{\N(Z-1;0,\sigma_V^2)-\N(Z+1;0,\sigma_V^2)}{\N(Z-1;0,\sigma_V^2)+\N(Z+1;0,\sigma_V^2)},
\end{equation}
where $\N(\gamma;\mu,\sigma^2)$ denotes the Gaussian density function with mean $\mu$ and variance $\sigma^2$, evaluated at $\gamma$. Similarly,
\begin{equation}
\hat{X}^{\rm L}_Y = \frac{\sigma_{XY}}{\sigma_{Y}^2}(Y-\mu_Y)+\mu_X = \frac{1}{1+\sigma_{U}^2}Y,
\end{equation}
where we used the facts that $\sigma_{Y}^2=\sigma_{X}^2+\sigma_{U}^2$ and $\sigma_{XY}=\sigma_{X}^2=1$. The PLMMSE estimator \eqref{eq:PLEadditiveNoise}, is therefore given by
\begin{equation}\label{eq:PLEexample}
\hat{X}^{\rm PL} = \gamma Y + (1-\gamma) \hat{X}_Z^{\rm NL},
\end{equation}
where $\gamma = (1-\sigma_{\hat{X}_Z^{\rm NL}}^2)/(1+\sigma_U^2-\sigma_{\hat{X}_Z^{\rm NL}}^2)$ (see \eqref{eq:PLEadditiveNoiseA}). Figure~\ref{fig:example} compares the MSE attained by the PLMMSE method to that of the naive convex-combination estimator
\begin{equation}\label{eq:NAIVEexample}
\hat{X}^{\rm naive} = \alpha \hat{X}^{\rm L}_Y + (1-\alpha) \hat{X}_Z^{\rm NL},
\end{equation}
for all $\alpha\in[0,1]$. As can be seen, when $\sigma_U=\sigma_V$, the MMSE of the PLMMSE method is roughly $12\%$ lower than the lowest MSE of the naive estimator. This advantage becomes less significant as $\sigma_U$ and $\sigma_V$ draw apart. As mentioned above, though, in multi-dimensional problems the PLMMSE method uses matrix weights rather than scalars, so that its potential for improvement over the naive estimator is yet greater.

\begin{figure*}[t]\centering
\subfloat[]{\includegraphics[width=0.3\textwidth, trim=1cm 0cm 1cm 0.5cm, clip]{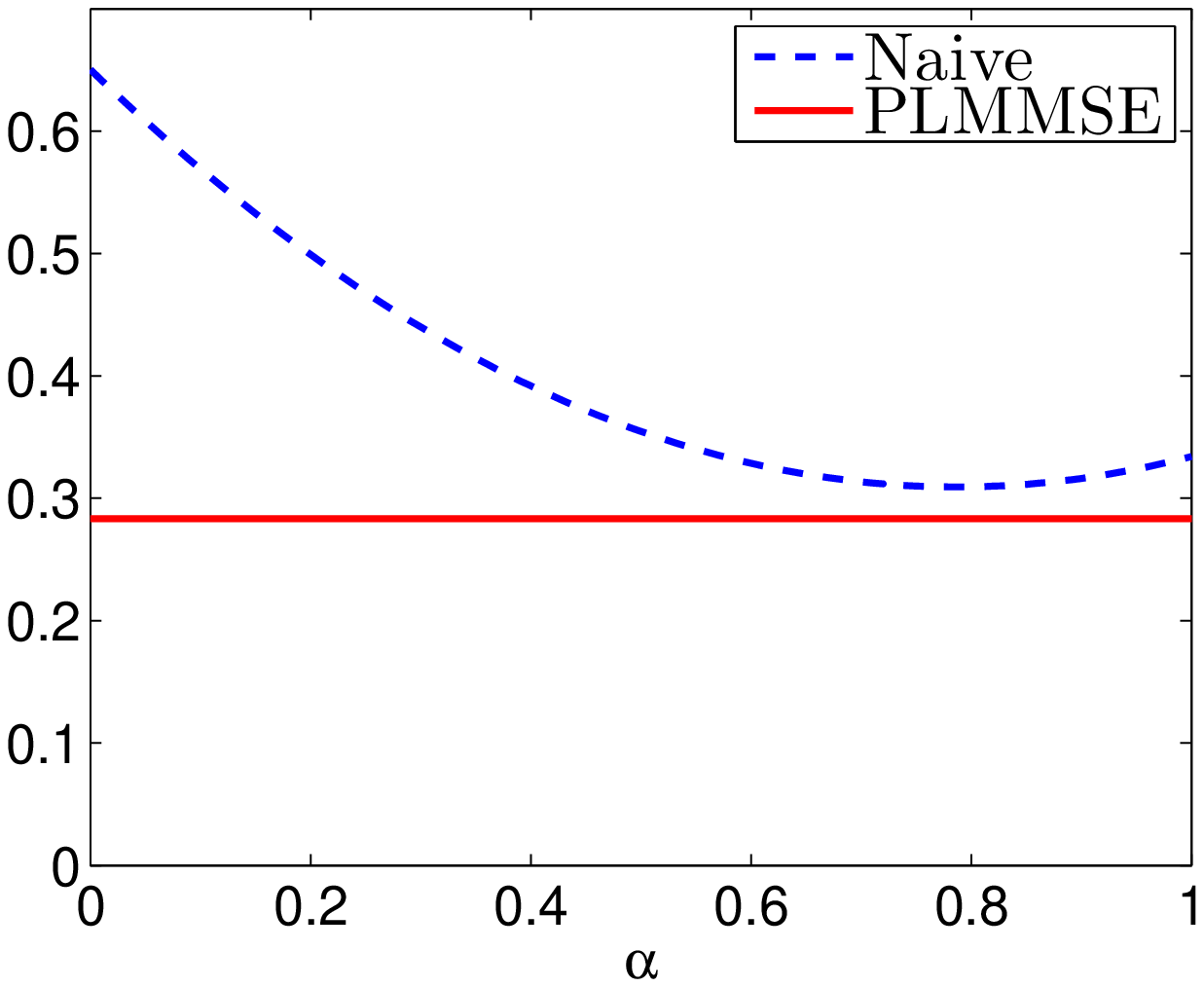}}
\subfloat[]{\includegraphics[width=0.3\textwidth, trim=1cm 0cm 1cm 0.5cm, clip]{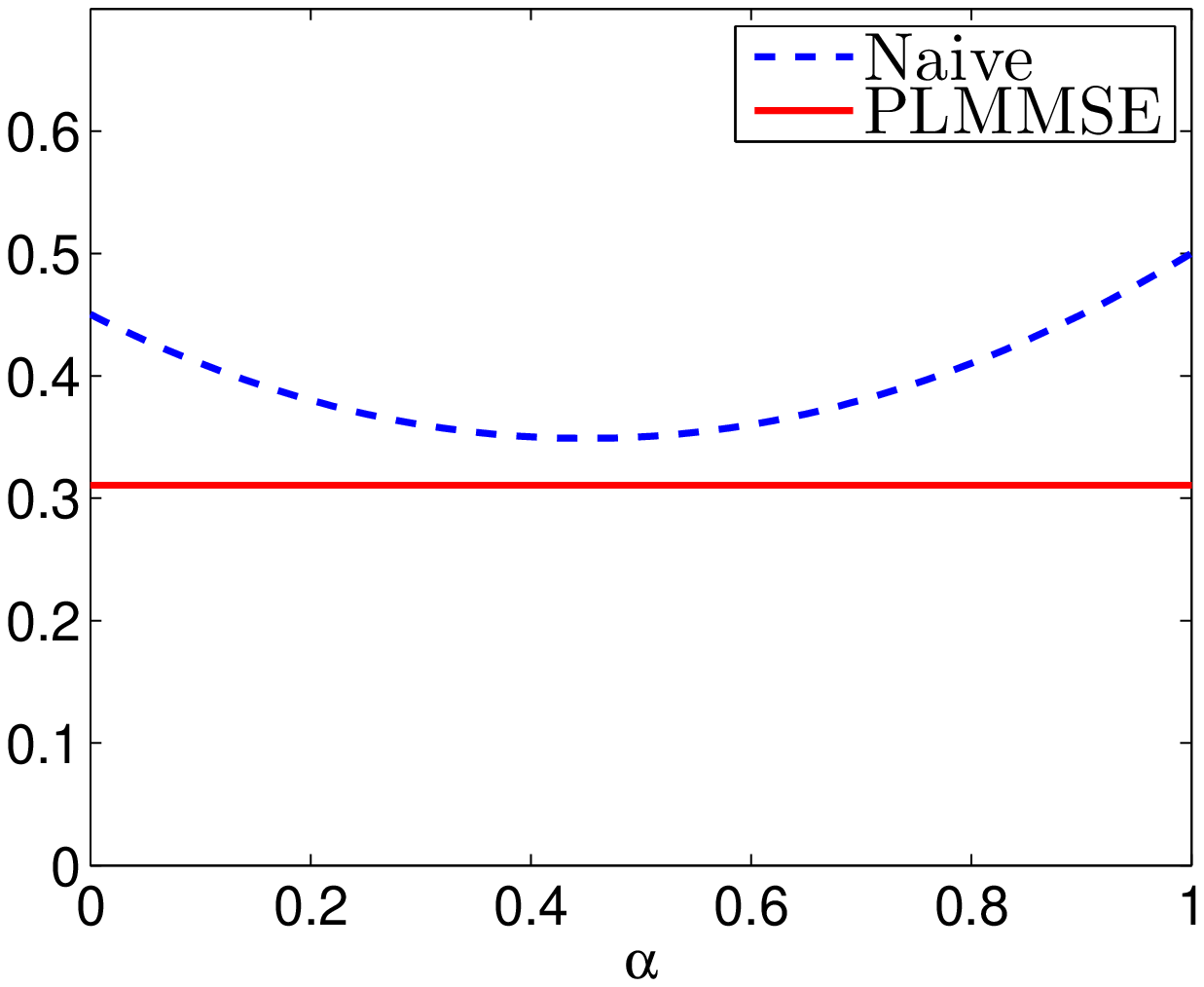}}
\subfloat[]{\includegraphics[width=0.3\textwidth, trim=1cm 0cm 1cm 0.5cm, clip]{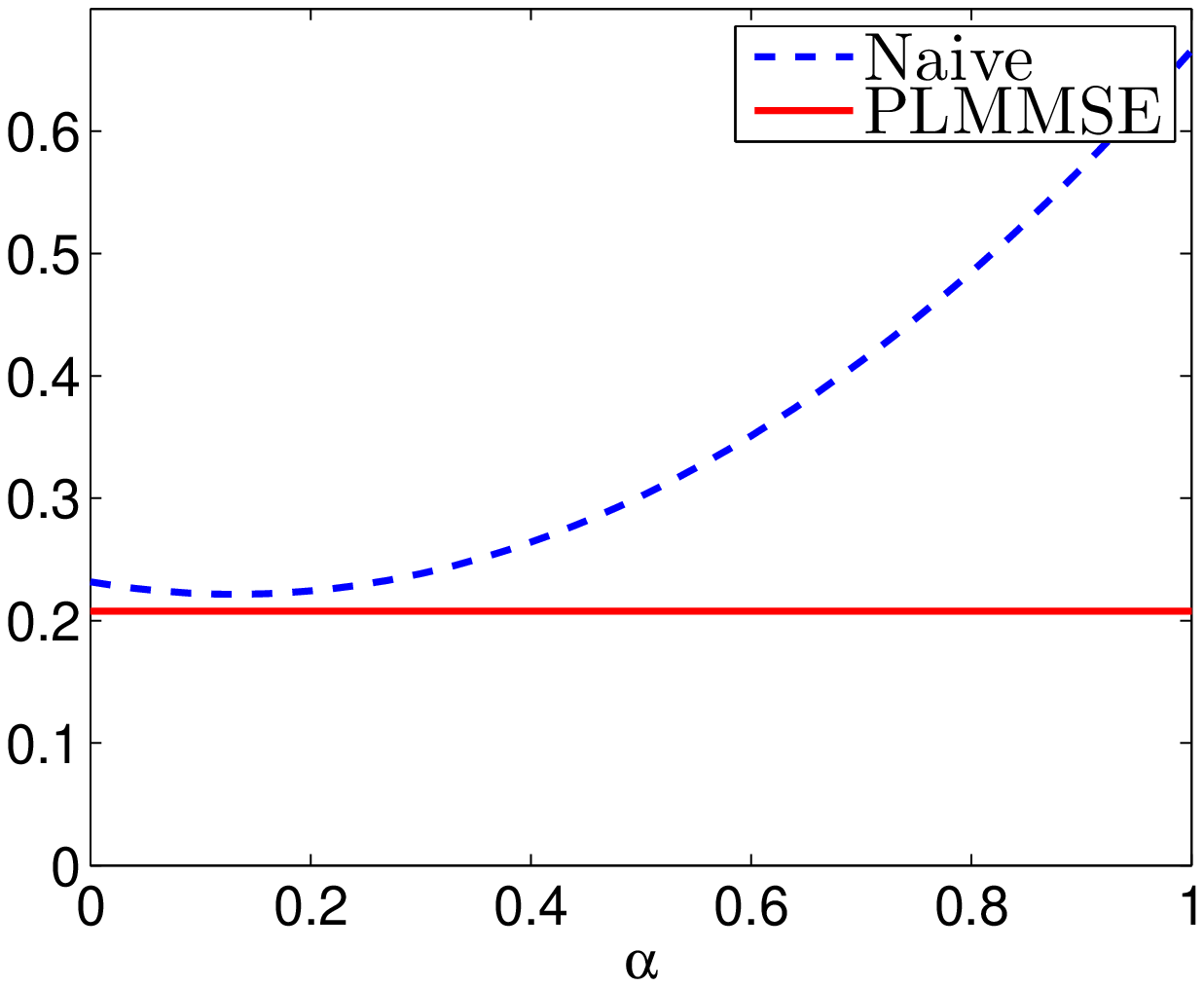}}
\caption{The MSE attained by $\hat{X}^{\rm PL}$ of \eqref{eq:PLEexample} and by $\hat{X}^{\rm naive}$ of \eqref{eq:NAIVEexample} as a function of $\alpha$ for several values of $\sigma_U$ and $\sigma_V$. (a)~$\sigma_U^2=0.5$, $\sigma_V^2=2$. (b)~$\sigma_U^2=1$, $\sigma_V^2=1$. (c)~$\sigma_U^2=2$, $\sigma_V^2=0.5$.}
\label{fig:example}
\end{figure*}

\section{Partial Knowledge of Statistical Relations}
\label{sec:ParKno}
As discussed in Section~\ref{sec:PLE}, one of the appealing properties of the PLMMSE solution is that it does not require knowing the entire joint distribution of $X$ and $Y$, but rather only its second-order moments. However, the fact that the PLMMSE estimator is merely determined by $\EE[X|Z]$, $\Cov(X,Y)$ and $F_{YZ}(y,z)$, does not yet imply that it is optimal among all methods that rely solely on these quantities. The question of optimality of an estimator with respect to partial knowledge regarding the joint distribution of the signal and measurements was recently addressed in \cite{ME11}. One of the notions of optimality considered there, which we adopt here as well, follows from a worst-case perspective. Specifically, any estimator $\hat{X}=g(Y,Z)$, may attain high MSE under certain distributions $F_{XYZ}(x,y,z)$ consistent with our knowledge and it may attain low MSE under other such distributions. We consider an estimator minimax-optimal if its worst-case MSE over the set of all feasible distributions is minimal. For example, it was shown in \cite{ME11} that the LMMSE estimator $\hat{X}_Y^{\rm L}$ attains the minimal possible worst-case MSE over the set of distributions $F_{XY}(x,y)$ with given first- and second-order moments.

In the next theorem we show that the PLMMSE method is optimal in the sense that its worst-case MSE over the set of all distributions $F_{XYZ}(x,y,z)$ complying with the knowledge appearing in Fig.~\ref{fig:XYZ} is minimal.

\begin{theorem}\label{thm:PLMMSEminimax}
Let $\A$ be the set of probability distributions of $(X,Y,Z)$ satisfying
\begin{align}\label{eq:setA}
&\Cov(X)=\bGa_{XX}, \;\; \Cov(X,Y)=\bGa_{XY}, \;\;\EE[X|Z]=g(Z), \nonumber\\
&F_{XYZ}(\infty,y,z)=F_{YZ}(y,z),
\end{align}
where $\bGa_{XX}$ and $\bGa_{XY}$ are given matrices, $g(z)$ is a given function and $F_{YZ}(y,z)$ is a given cumulative distribution function. Then, among all estimators of $X$ based on $Y$ and $Z$, the PLMMSE method \eqref{eq:Xhat} has the minimal worst-case MSE
\begin{equation}\label{eq:minimaxLMMSE}
\sup_{F_{XYZ}\in\A} \EE_{F_{XYZ}}\!\left[\left\|X-\hat{X}\right\|^2\right],
\end{equation}
over the set $\A$.
\end{theorem}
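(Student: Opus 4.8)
The plan is to prove minimax optimality by exhibiting a saddle point for the game between the statistician and nature: a \emph{least-favorable} distribution $F^\star\in\A$ under which $\hat{X}^{\rm PL}$ coincides with the (MMSE-optimal) conditional mean, combined with the observation that $\hat{X}^{\rm PL}$ has \emph{constant} MSE across all of $\A$. The starting point is that $\hat{X}^{\rm PL}$ is literally the same function of $(Y,Z)$ for every member of $\A$: by \eqref{eq:XhatExplicit} it is assembled only from $\bGa_{XY}$, the function $g(z)=\EE[X|Z=z]$, and the marginal law $F_{YZ}$ (the latter entering through $\EE[Y|Z]$, $\bGa_{\tilde Y\tilde Y}$ and $\bGa_{\hat{X}_Z^{\rm NL}\hat{Y}_Z^{\rm NL}}$), all of which are frozen by the constraints \eqref{eq:setA}. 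Thus $\hat{X}^{\rm PL}=\bA\tilde Y+g(Z)$ with $\bA=\bGa_{X\tilde Y}\bGa_{\tilde Y\tilde Y}^\dag$ a fixed matrix and $\tilde Y$ a fixed function of $(Y,Z)$.

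Next I would compute the risk and show it does not vary over $\A$. Writing $\tilde X=X-\EE[X|Z]$ and using $\EE[\tilde X\,|\,Z]=\EE[\tilde Y\,|\,Z]=0$, one has $X-\hat{X}^{\rm PL}=\tilde X-\bA\tilde Y$, which is zero-mean with covariance
\[
\Cov(X-\hat{X}^{\rm PL})=\bGa_{XX}-\bGa_{\hat{X}_Z^{\rm NL}\hat{X}_Z^{\rm NL}}-\bGa_{X\tilde Y}\bGa_{\tilde Y\tilde Y}^\dag\bGa_{X\tilde Y}^T,
\]
after substituting $\bA$ and using $\bGa^\dag\bGa\bGa^\dag=\bGa^\dag$. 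Every matrix on the right is among the quantities fixed by \eqref{eq:setA}, so $\EE_{F}[\|X-\hat{X}^{\rm PL}\|^2]$ equals the trace of this matrix for \emph{every} $F\in\A$; consequently the worst-case MSE \eqref{eq:minimaxLMMSE} of $\hat{X}^{\rm PL}$ over $\A$ equals this common value.

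I would then build the least-favorable $F^\star$ by drawing $(Y,Z)\sim F_{YZ}$ and setting $X=\bA\tilde Y+g(Z)+W$, where $W$ is independent of $(Y,Z)$, zero-mean, with covariance equal to the error covariance displayed above. Verifying $F^\star\in\A$ is routine: the $(Y,Z)$-marginal is $F_{YZ}$; $\EE_{F^\star}[X|Z]=\bA\EE[\tilde Y|Z]+g(Z)=g(Z)$; $\Cov_{F^\star}(X,Y)=\bGa_{XY}$ (substitute $\bA$ and use that $\tilde Y$ is uncorrelated with every function of $Z$); and $\Cov_{F^\star}(X)=\bGa_{XX}$ by the choice of $\Cov(W)$. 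Since $\EE_{F^\star}[X|Y,Z]=\bA\tilde Y+g(Z)=\hat{X}^{\rm PL}$, the PLMMSE rule is the MMSE estimator under $F^\star$. The theorem then follows from the standard saddle-point chain, valid for any $\hat X=\phi(Y,Z)$,
\[
\sup_{F\in\A}\EE_F[\|X-\hat X\|^2]\;\ge\;\EE_{F^\star}[\|X-\hat X\|^2]\;\ge\;\EE_{F^\star}[\|X-\hat{X}^{\rm PL}\|^2]\;=\;\sup_{F\in\A}\EE_F[\|X-\hat{X}^{\rm PL}\|^2],
\]
where the middle inequality is optimality of the conditional mean under $F^\star$ and the last equality is the constant-risk property.

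The step I expect to be the main obstacle is confirming that $F^\star$ genuinely lies in $\A$, and specifically that the noise covariance I prescribe is a legitimate positive semidefinite matrix. The key saving observation is that this matrix is exactly $\Cov_F(X-\hat{X}^{\rm PL})$ under \emph{any} feasible $F$ (the set $\A$ being assumed nonempty), so it is automatically a covariance and hence positive semidefinite; no separate inequality need be proved. A secondary technicality is the pseudo-inverse bookkeeping---e.g. the identity $\bA\bGa_{\tilde Y\tilde Y}=\bGa_{X\tilde Y}$ needed in the $\Cov(X,Y)$ check---which I would settle via the range inclusion $\mathrm{range}(\bGa_{X\tilde Y}^T)\subseteq\mathrm{range}(\bGa_{\tilde Y\tilde Y})$, itself a consequence of positive semidefiniteness of the joint covariance of $(\tilde X,\tilde Y)$.
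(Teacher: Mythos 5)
Your proposal is correct and follows essentially the same route as the paper: establish that $\hat{X}^{\rm PL}$ has constant MSE over $\A$, construct the least-favorable distribution $F^\star$ by setting $X=\bGa_{X\tilde Y}\bGa_{\tilde Y\tilde Y}^\dag\tilde Y+g(Z)+U$ with $U$ independent of $(Y,Z)$ and covariance equal to the PLMMSE error covariance (so that the MMSE and PLMMSE estimators coincide under $F^\star$), and conclude via the same saddle-point chain of inequalities. Your justification that the prescribed noise covariance is positive semidefinite (it is the error covariance of the LMMSE estimate of $\tilde X$ from $\tilde Y$ under any feasible distribution) and your use of the range inclusion for the pseudo-inverse identity match the paper's Lemma~\ref{le:ExistDist} and its appeal to \cite[Lemma~2]{TH02}.
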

\begin{proof}
See Appendix~\ref{sec:ProofOptPLMMSE}.
\end{proof}

\section{PLMMSE Estimation of Sparse Vectors}
\label{sec:simulation}
We now demonstrate the usefulness of the PLMMSE estimator in the context of sparse approximations. Specifically, consider the situation in which $X$ is known to be sparsely representable in a unitary dictionary $\bPsi\in\RN^{M\times M}$ in the sense that
\begin{equation}\label{eq:X_Psi_W}
X = \bPsi W
\end{equation}
for some RV $W$ that is sparse with high probability. More concretely, we assume that the elements of $W$ are given by
\begin{equation}\label{eq:RV_W}
W_i = S_i B_i,\quad i=1,\ldots M,
\end{equation}
where the RVs $\{B_i\}$ and $\{S_i\}$ are statistically independent, $B_i\sim\N(0,1)$ and $S_i=0$ (or take small values) with high probability. This assumption is very common in the sparse approximation literature. For example, in \cite{JXC08} and \cite{T01} the variables $S_i$ are assumed to follow a Gamma distribution. Here, we assume, as in \cite{SPZ08}, that they are binary, such that $\PP(S_i=\sigma_{1,i})=1-\PP(S_i=\sigma_{2,i})=p_i$ with some $\sigma_{1,i}\geq0$, $\sigma_{2,i}\geq0$ and $p_i\geq0$. In particular, setting $\sigma_{2,i}=0$ and $p_i$ small corresponds to vectors $W$ that are sparse with high probability.

Assume $X$ is observed through two linear systems, as in
\eqref{eq:additiveNoise}, where $\bH$ is an arbitrary matrix, $\bG$ is an orthogonal matrix satisfying $\bG^T\bG=\alpha^2\bI$ for some constant $\alpha\neq 0$, and $U$ and $V$ are Gaussian RVs with $\bGa_{UU}=\sigma_U^2\bI$ and $\bGa_{VV}=\sigma_V^2\bI$. A practical image enhancement scenario and a target tracking situation corresponding to this setting are detailed in sections~\ref{sec:simulationDeblur} and \ref{sec:tracking}, respectively. This setting can be cast in the standard sparse approximation form as
\begin{equation}\label{eq:measurements}
\begin{pmatrix}Y \\ Z\end{pmatrix}=\begin{pmatrix}\bH \\ \bG\end{pmatrix}X+\begin{pmatrix}U\\V\end{pmatrix}.
\end{equation}
It is well known that the expression for the MMSE estimate $\EE[X|Y,Z]$ in this case generally comprises $2^M$ summands, which correspond to all different possibilities of sparsity patterns in $W$ \cite{SPZ08}. This renders computation of the MMSE estimate prohibitively expensive even for modest values of $M$ and consequently various approaches have been devised to approximate this solution by a small number of terms (see \eg \cite{SPZ08} and references therein). For example, the fast Bayesian matching pursuit (FBMP) algorithm developed in \cite{SPZ08} employs a search in the tree representing all sparsity patterns in order to choose the terms participating in the approximation. We note that FBMP, as well as other sparse recovery methods, can operate with general measurement and dictionary matrices.

There are some special cases, however, in which the MMSE estimate possesses a simple structure, which can be implemented efficiently. As we show next, one such case is when both the channel's response and the dictionary over which $X$ is sparse correspond to orthogonal matrices. As in our setting $\bPsi$ is unitary and $\bG$ is orthogonal, this implies that we can efficiently compute the MMSE estimate $\EE[X|Z]$ of $X$ from $Z$. Therefore, instead of resorting to schemes for approximating $\EE[X|Y,Z]$, we can employ the PLMMSE estimator of $X$ based on $Y$ and $Z$, which possesses a closed form expression (see \eqref{eq:PLEadditiveNoise}) in this situation. This technique is particularly effective when the SNR of the observation $Y$ is much worse than that of $Z$, since the MMSE estimate $\EE[X|Y,Z]$ in this case is close to being partially linear in $Y$. Such a setting is demonstrated in Section~\ref{sec:simulationSub}.

\subsection{MMSE Estimate of a Sparse Signal in a Unitary Dictionary}
\label{sec:sparse}

In our setting
\begin{equation}\label{eq:Zsparsity}
Z=\bG X + V = \bG \bPsi W + V,
\end{equation}
with $W$ of \eqref{eq:RV_W}. Since $\bG$ and $\bPsi$ are orthogonal, they are invertible, so that
\begin{equation}
\tilde{Z}=\frac{1}{\alpha}\bPsi^T\bG^T Z
\end{equation}
carries the same information on $X$ as $Z$ does, namely
\begin{equation}
\EE[X|Z] = \EE[X|\tilde{Z}] = \bPsi\EE[W|\tilde{Z}].
\end{equation}
Now, for every $i=1,\ldots,M$, we have that $\tilde{Z}_i = \alpha W_i + \tilde{V}_i$, where $\tilde{V}=\alpha^{-1}\bPsi^T\bG^T V$ is distributed $\N(0,\sigma_V^2\bI)$. Therefore, the set $\{\tilde{Z}_j\}_{j\neq i}$ is statistically independent of the pair $(W_i, \tilde{Z}_i)$ and consequently
\begin{align}\label{eq:EAgivenZ}
\EE[W_i|\tilde{Z}] &= \EE[W_i|\tilde{Z}_i] \nonumber\\
&= \EE[W_i|\tilde{Z}_i, S_i = \sigma_{1,i}]\PP(S_i = \sigma_{1,i}|\tilde{Z}_i) \nonumber\\
&\hspace{0.4cm}+ \EE[W_i|\tilde{Z}_i, S_i = \sigma_{2,i}]\PP(S_i = \sigma_{2,i}|\tilde{Z}_i).
\end{align}
Under the event $S_i=\sigma_{j,i}$ with a fixed $j\in\{1,2\}$, the RVs $W_i$ and $\tilde{Z}_i$ are jointly normally distributed with mean zero, implying that
\begin{equation}\label{eq:EAgivenS}
\EE[W_i|\tilde{Z}_i, S_i = \sigma_{j,i}] = \frac{\Cov(W_i,\tilde{Z}_i)}{\Cov(\tilde{Z}_i)} = \frac{\alpha\sigma_{j,i}^2}{\alpha^2\sigma_{j,i}^2+\sigma_V^2}\tilde{Z}_i.
\end{equation}
Finally, using Bayes rule, the term $\PP(S_i = \sigma_{1,i}|\tilde{Z}_i)$ reduces to
\begin{align}\label{eq:PSgivenZ}
&\frac{f_{\tilde{Z_i}|S_i}(\tilde{Z}_i|S_i = \sigma_{1,i})p_i}{f_{\tilde{Z_i}|S_i}(\tilde{Z}_i|S_i = \sigma_{1,i})p_i+f_{\tilde{Z_i}|S_i}(\tilde{Z}_i|S_i = \sigma_{2,i})(1-p_i)} \nonumber\\
&=\frac{\N(\tilde{Z}_i;0,\alpha^2\sigma_{1,i}^2+\sigma_V^2)p_i}{\N(\tilde{Z}_i;0,\alpha^2\sigma_{1,i}^2+\sigma_V^2)p_i+\N(\tilde{Z}_i;0,\sigma_{2,i}^2+\sigma_V^2)(1-p_i)}
\end{align}
and, similarly, $\PP(S_i = \sigma_{2,i}|\tilde{Z}_i)$ is given by
\begin{align}\label{eq:PSgivenZ}
\frac{\N(\tilde{Z}_i;0,\alpha^2\sigma_{2,i}^2+\sigma_V^2)(1-p_i)}{\N(\tilde{Z}_i;0,\alpha^2\sigma_{1,i}^2+\sigma_V^2)p_i+\N(\tilde{Z}_i;0,\sigma_{2,i}^2+\sigma_V^2)(1-p_i)}.
\end{align}
Substituting \eqref{eq:PSgivenZ} and \eqref{eq:EAgivenS} into \eqref{eq:EAgivenZ} leads to the following observation.
\begin{theorem}
The MMSE estimate of $X$ of \eqref{eq:X_Psi_W} given $Z$ of \eqref{eq:Zsparsity} is
\begin{equation}\label{eq:EXZsparse}
\EE[X|Z] = \bPsi \tilde{f}\left(\frac{1}{\alpha}\bPsi^T \bG^T Z\right),
\end{equation}
where $\tilde{f}(\tilde{z})=(f(\tilde{z}_1),\ldots,f(\tilde{z}_M))^T$, with
\begin{align}\label{eq:f}
&f(\tilde{z}_i) = \nonumber\\ &\frac{\tilde{z}_i \left(\frac{\alpha\sigma_{1,i}^2p_i\N(\tilde{z}_i;0,\alpha^2\sigma_{1,i}^2+\sigma_V^2)}{\alpha^2\sigma_{1,i}^2+\sigma_V^2} + \frac{(1 - p_i)\alpha\sigma_{2,i}^2\N(\tilde{z}_i;0,\alpha^2\sigma_{2,i}^2+\sigma_V^2)}{\alpha^2\sigma_{2,i}^2+\sigma_V^2}\right)} {p_i\,\N(\tilde{z}_i;0,\alpha^2\sigma_{1,i}^2+\sigma_V^2)+(1-p_i)\,\N(\tilde{z}_i;0,\alpha^2\sigma_{2,i}^2+\sigma_V^2)}.
\end{align}
\end{theorem}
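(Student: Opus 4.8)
The plan is to exploit the orthogonality of both $\bG$ and $\bPsi$ to rotate and whiten the measurement $Z$ into a form in which the estimation problem fully decouples across the $M$ coordinates of $W$. Writing $Z=\bG\bPsi W+V$ and applying the linear map $\tilde{Z}=\tfrac{1}{\alpha}\bPsi^T\bG^T Z$, the identity $\bG^T\bG=\alpha^2\bI$ together with $\bPsi^T\bPsi=\bI$ collapses the signal term to $\alpha W$, so that $\tilde{Z}_i=\alpha W_i+\tilde{V}_i$ with $\tilde{V}=\alpha^{-1}\bPsi^T\bG^T V$. The same two identities show $\Cov(\tilde{V})=\sigma_V^2\bI$, i.e. the transformed noise remains white. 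The first thing I would establish carefully is that this map discards no information relevant to $X$, so that $\EE[X|Z]=\EE[X|\tilde{Z}]=\bPsi\,\EE[W|\tilde{Z}]$, where the last equality is immediate from $X=\bPsi W$ and the invertibility of $\bPsi$.

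Given the whitened model, the next step is to reduce the vector conditional expectation to $M$ scalar ones. Since the $W_i$ are independent across $i$ and $\tilde{V}$ has independent coordinates, the pair $(W_i,\tilde{Z}_i)$ is independent of $\{\tilde{Z}_j\}_{j\neq i}$, which yields $\EE[W_i|\tilde{Z}]=\EE[W_i|\tilde{Z}_i]$. For each scalar I would compute $\EE[W_i|\tilde{Z}_i]$ by conditioning on the binary switch $S_i$: under $\{S_i=\sigma_{j,i}\}$ the pair $(W_i,\tilde{Z}_i)$ is jointly zero-mean Gaussian, so the conditional mean is the closed-form coefficient $\tfrac{\alpha\sigma_{j,i}^2}{\alpha^2\sigma_{j,i}^2+\sigma_V^2}\tilde{Z}_i$ of \eqref{eq:EAgivenS}. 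The mixing weights $\PP(S_i=\sigma_{j,i}|\tilde{Z}_i)$ follow from Bayes' rule using that $\tilde{Z}_i\,|\,\{S_i=\sigma_{j,i}\}\sim\N(0,\alpha^2\sigma_{j,i}^2+\sigma_V^2)$, giving the posteriors in \eqref{eq:PSgivenZ}.

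The proof then concludes by substituting \eqref{eq:EAgivenS} and \eqref{eq:PSgivenZ} into the total-expectation decomposition \eqref{eq:EAgivenZ} and collecting terms over the common denominator; factoring out $\tilde{Z}_i$ produces exactly the scalar nonlinearity $f$, whence $\EE[W|\tilde{Z}]=\tilde{f}(\tilde{Z})$ and $\EE[X|Z]=\bPsi\tilde{f}(\tfrac{1}{\alpha}\bPsi^T\bG^T Z)$. I expect the only genuinely nontrivial point to be the sufficiency claim of the first paragraph: when $\bG$ is non-square, $\tilde{Z}$ throws away the component of $Z$ orthogonal to the column space of $\bG$, so one must argue that this component is pure noise statistically independent of $(X,\tilde{Z})$ — which holds precisely because $V$ is isotropic Gaussian and independent of $X$ — so that conditioning on it adds nothing. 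Everything downstream is a routine Gaussian-mixture computation.
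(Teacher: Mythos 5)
Your proof is correct and follows essentially the same route as the paper's: whiten via $\tilde{Z}=\tfrac{1}{\alpha}\bPsi^T\bG^T Z$, reduce to scalar problems using the coordinatewise independence of $(W_i,\tilde{Z}_i)$ from $\{\tilde{Z}_j\}_{j\neq i}$, condition on the binary switch $S_i$ to get the Gaussian conditional means and Bayes-rule posteriors, and substitute. Your additional sufficiency argument for non-square $\bG$ (the discarded component of $Z$ is isotropic Gaussian noise independent of $(X,\tilde{Z})$, since $\bG^T(\bI-\alpha^{-2}\bG\bG^T)=0$) is a valid refinement of the paper's blunter assertion that $\bG$ is invertible; otherwise the two arguments coincide.
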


Therefore, if, for example, $\bPsi$ is a wavelet basis and $\bG=\bI$ (so that $\alpha=1$), then $\EE[X|Z]$ can be efficiently computed by taking the wavelet transform of $Z$ (multiplication by $\bPsi^T$), applying a scalar shrinkage function on each of the coefficients (namely calculating $f(\tilde{z}_i)$ for the $i$th coefficient) and applying the inverse wavelet transform (multiplication by $\bPsi$) on the result. Note that the shrinkage curve \eqref{eq:f} is different than the soft-threshold operation, originally proposed in \cite{DJ95}. The latter can be obtained as the MAP solution with a Laplacian prior, whereas our function corresponds to the MMSE solution with a Gaussian mixture prior.

\subsection{PLMMSE Estimate of a Sparse Signal}
\label{sec:PLMMSEtwoObservations}
Equipped with a closed form expression for $\EE[X|Z]$, we can now obtain an expression for the PLMMSE estimator \eqref{eq:PLEadditiveNoise}. Specifically, we have that
\begin{equation}
\bGa_{XX} = \bPsi \bGa_{WW} \bPsi^T ,
\end{equation}
where $\bGa_{WW}$ is a diagonal matrix with $(\bGa_{WW})_{i,i}=p_i\sigma_{1,i}^2+(1-p_i)\sigma_{2,i}^2$. Similarly,
\begin{equation}
\bGa_{\hat{X}_Z^{\rm NL}\hat{X}_Z^{\rm NL}} = \bPsi \Cov(\tilde{f}(\tilde{Z})) \bPsi^T ,
\end{equation}
where $\Cov(\tilde{f}(\tilde{Z}))$ is a diagonal matrix whose $(i,i)$ element is $\beta_i=\Cov(f(\tilde{Z_i}))$. This is due to the fact that the elements of $\tilde{Z}$ are statistically independent and the fact that the function $\tilde{f}(\cdot)$ operates element-wise on its argument. Therefore, the PLMMSE estimator is given in our setting by equation \eqref{eq:PLEadditiveNoise} with $\EE[X|Z]$ of \eqref{eq:EXZsparse} and with the matrix
\begin{align}\label{eq:PLEadditiveNoiseA_app}
\bA &= \bPsi\left(\bGa_{WW} - \Cov(\tilde{f}(\tilde{Z}))\right) \bPsi^T \bH^T \nonumber\\
&\hspace{0.4cm}\times\left(\bH\bPsi\left(\bGa_{WW} - \Cov(\tilde{f}(\tilde{Z}))\right) \bPsi^T\bH^T + \sigma_U^2\bI\right)^\dag.
\end{align}
Observe that there is generally no closed form expression for the scalars $\beta_i=\Cov(f(\tilde{Z_i}))$, rendering it necessary to compute them numerically. Since each $\beta_i$ is the variance of a scalar RV, it can be computed very efficiently, either by approximating the corresponding integral by a sum over a set of points on the real line or by Monte Carlo simulation. In Section~\ref{sec:simulationDeblur} we demonstrate how this can be done in a practical scenario.

An important special case corresponds to the setting in which $p_i=p$, $\sigma^2_{1,i}=\sigma^2_{1}$, and $\sigma^2_{2,i}=\sigma^2_{2}$ for every $i$. In this situation, we also have that $\beta_i=\beta$ for every $i$. Furthermore,
\begin{equation}
\bGa_{WW}= (p\sigma_1^2+(1-p)\sigma_2^2) \bI
\end{equation}
so that $\bA$ is simplified to
\begin{equation}\label{eq:PLEadditiveNoiseAsimple}
\bA = \bH^T \left(\bH\bH^T + \frac{\sigma_U^2}{p\sigma_1^2+(1-p)\sigma_2^2-\beta}\bI\right)^\dag.
\end{equation}
As can be seen, in this setting $\bA$ does not involve multiplication by $\bPsi$ or $\bPsi^T$. Thus, if $\bH$ corresponds to a convolution operation, then $\bA$ also corresponds to a filter, which can be efficiently applied in the Fourier domain.

\subsection{Numerical Study}
\label{sec:simulationSub}

We now compare via simulations the MSE attained by $\hat{X}^{\rm PL}$ to that attained by $\hat{X}_{Z}^{\rm NL}$, $\hat{X}_{Y}^{\rm L}$ and the approximation to $\EE[X|Y,Z]$ produced by the FBMP method. Since we generate the signal $X$ and measurements $Y$ and $Z$ according to the assumed model, we do not compare our method to other Baysian approaches, such as Bayesian compressive sensing (BSC) \cite{JXC08} and sparse Bayesian learning (SBL) \cite{T01}, which assume a different generative model. Nevertheless, we note that a practical scenario, which deviates from the assumptions of all these methods, was studied in \cite{SPZ08} and showed that the performance of FBMP is commonly better than that of BSC and SBL. In terms of running time, FBMP is typically an order of magnitude faster than SBL and roughly twice as slow as BSC.

In our experiment $\bPsi \in \RN^{256\times 256}$ was taken to be a Hadamard matrix with normalized columns. The matrix $\bH$ corresponded to (circular) convolution with the sequence $h[n]=\exp\{-|n|/8.5\}$ and $\bG$ was taken to be diagonal. To comply with the assumption made in \cite{SPZ08} that the columns of the measurement matrix are normalized, we normalized the columns of $\bH$ to be of norm $0.99$ and set $\bG=0.01\bI$. We set $p_i=p$, $\sigma^2_{1,i}=\sigma^2_{1}$, and $\sigma^2_{2,i}=0$ for every $i$, so that $X$ was truly sparse with high probability. Figure~\ref{fig:example} depicts the MSE of all estimators as a function of the input SNR, which we define as $10\log_{10}(p\sigma_1^2/\sigma^2)$. As can be seen, the MSE of the PLMMSE method is significantly lower then that of $\hat{X}_{Z}^{\rm NL}$ and $\hat{X}_{Y}^{\rm L}$ and is very close to that attained by the FBMP method. At low SNR levels and low sparsity levels (high $p$) the performance of the PLMMSE method is even slightly better than the FBMP approach.

The average running time of the PLMMSE method was $0.6$msec for all tested values of $p$. The average running times of the FBMP method were $52.7$msec, $79.6$msec and $125.2$msec, respectively, for $p=1/3$, $p=1/2$ and $p=2/3$. The ratio between the computational cost of the two approaches, which was two orders of magnitude in this experiment, becomes higher as the dimension of $X$ is increased. At certain dimensions, such as images of size $512\times 512$ (in which case $M=512^2$), the FBMP method becomes impractical to apply while PLMMSE estimation can still be used very efficiently.

A word of caution is in place, though. In situations in which the SNR of the measurement $Y$ is roughly the same as that of $Z$ (or better), the FBMP method is advantageous in terms of performance. Therefore in this regime, decision on the use of PLMMSE estimation boils down a performance-complexity tradeoff.

\begin{figure*}[t]\centering
\subfloat[]{\includegraphics[width=0.33\textwidth, trim=1cm 0.2cm 1cm 0.5cm, clip]{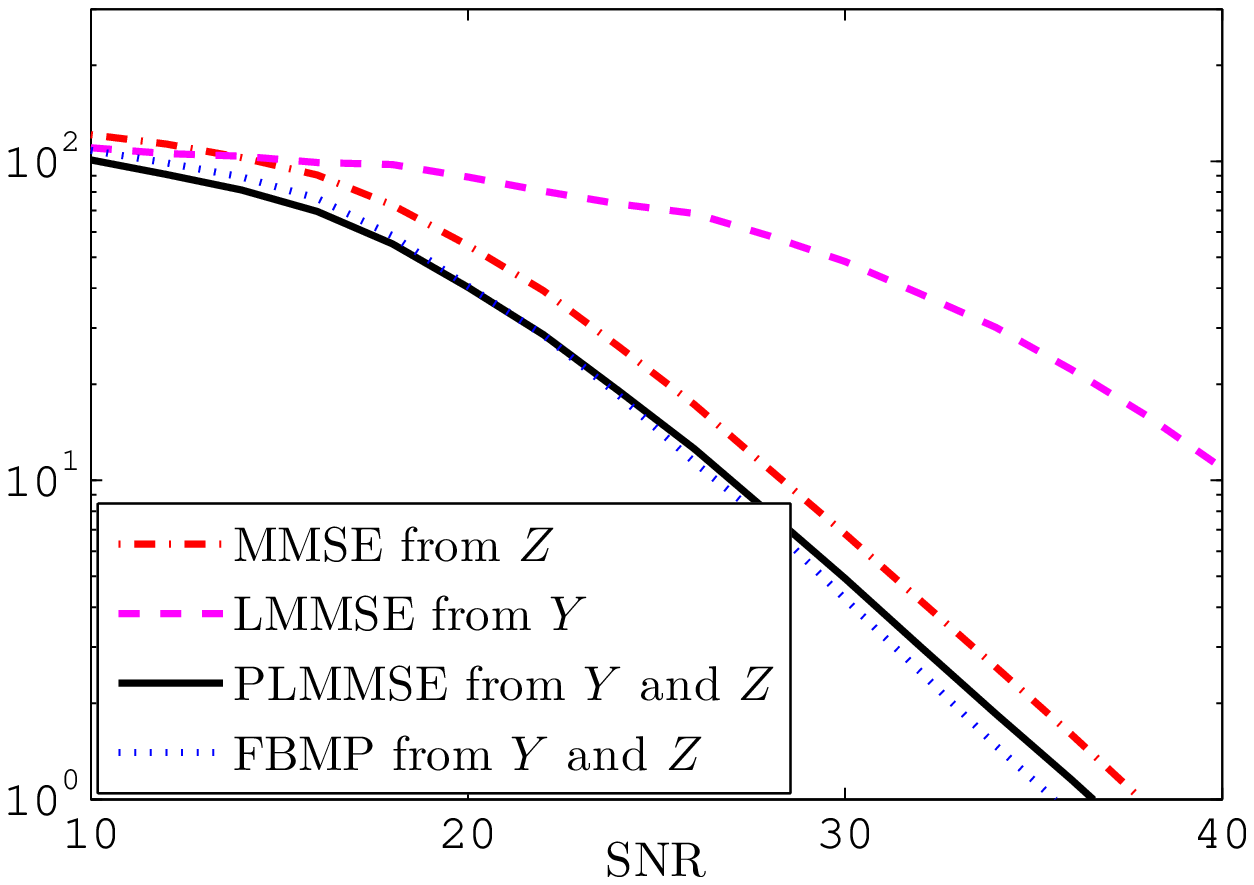}}
\subfloat[]{\includegraphics[width=0.33\textwidth, trim=1cm 0.2cm 1cm 0.5cm, clip]{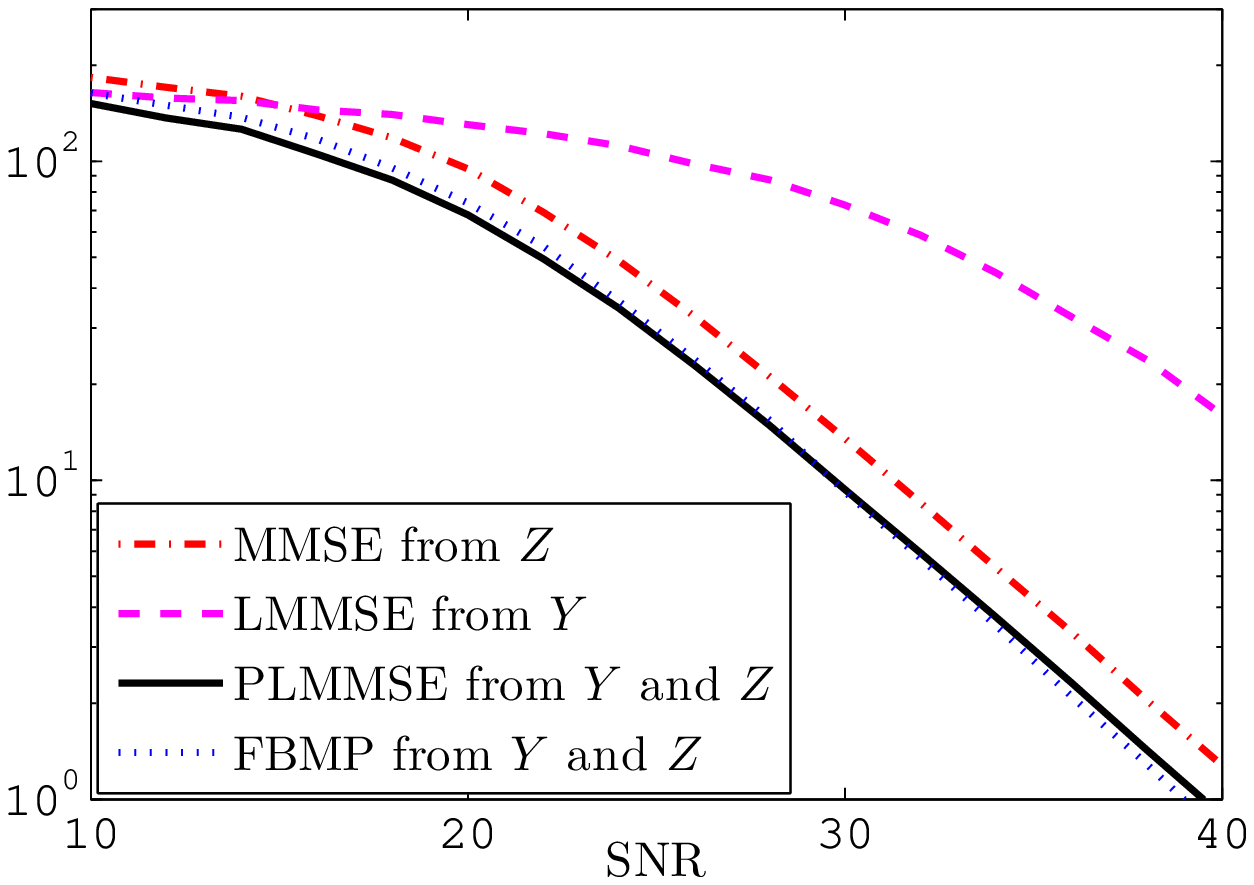}}
\subfloat[]{\includegraphics[width=0.33\textwidth, trim=1cm 0.2cm 1cm 0.5cm, clip]{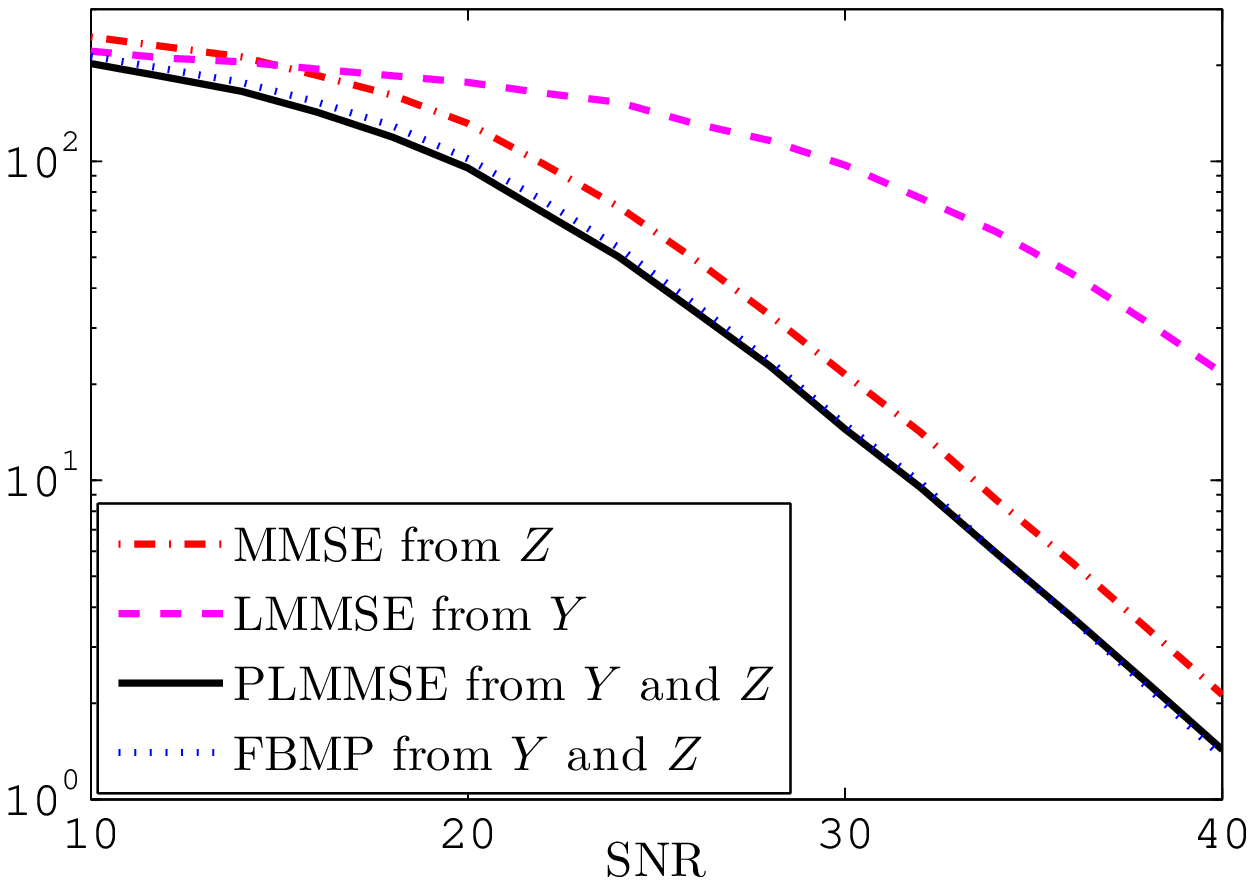}}
\caption{The MSE attained by $\hat{X}_{Z}^{\rm NL}$, $\hat{X}^{\rm L}_{Y}$, $\hat{X}^{\rm PL}$ and the approximation of $\EE[X|Y,Z]$ produced by the FBMP method \cite{SPZ08}. (a)~$p=1/3$. (b)~$p=1/2$. (c)~$p=2/3$.}
\label{fig:example}
\end{figure*}

\section{Application to Image Deblurring with Blurred/Noisy Image Pairs}
\label{sec:simulationDeblur}

When taking photos in dim light using a hand-held camera, there is a tradeoff between noise and motion blur, which can be controlled by tuning the shutter speed. Indeed, when using a long exposure time, the image typically comes out blurred due to camera shake. On the other hand, with a short exposure time (and high camera gain), the image is very noisy. In \cite{YSQS07} it was demonstrated how a high quality image can be constructed by properly processing two images of the same scene, one blurred and one noisy.

We now show how the PLMMSE approach can be applied in this setting to obtain plausible recoveries at a speed several orders of magnitude faster than any other sparsity-based method. In our setting $X$, $Y$ and $Z$ correspond, respectively, to the original, blurred (and slightly noisy) and noisy images. Thus, the measurement model is that described by \eqref{eq:measurements}, where $\bH$ corresponds to spatial convolution with some blur kernel, $\bG=\bI$, and $U$ and $V$ correspond to white Gaussian images with small and large variances respectively. We further assume that the image $X$ is sparse in some orthogonal wavelet basis $\bPsi$, such that it can be written as in \eqref{eq:X_Psi_W} and \eqref{eq:RV_W}.


As we have seen, in this setting, the PLMMSE estimator can be computed in two stages. In the first stage, we calculate $\hat{X}_Z^{\rm NL}=\EE[X|Z]$ (namely, denoise the image $Z$) by computing the wavelet transform $\tilde{Z}=\bPsi^T Z$, applying the scalar shrinkage function \eqref{eq:f} on each wavelet coefficient,
and taking the inverse wavelet transform of the result. This stage requires knowledge of the parameters $\{p_i\}$, $\{\sigma_{1,i}\}$, $\{\sigma_{2,i}\}$ and $\sigma_V$. To this end, we assume that $\sigma_{2,i}=0$ for all $i$ (a truly sparse image) and that $p_i$ and $\sigma_{1,i}$ are the same for wavelets coefficients at the same level. In other words, all wavelet coefficients of the noisy image $Z$ at level $\ell$ correspond to independent draws from the Gaussian mixture
\begin{equation}
f_{\tilde{Z_i}}(\tilde{z}) = p^\ell\N(\tilde{z};0,\alpha^2\sigma_{1,\ell}^2+\sigma_V^2)+(1-p^\ell)\N(\tilde{z};0,\sigma_V^2).
\end{equation}
Consequently, $p^\ell$, $\sigma_{1,\ell}$ and $\sigma_V$ can be estimated by expectation maximization (EM). In our experiments, we assumed that $\sigma_V$ is known and thus did not estimate it.

In the second stage, the denoised image $\hat{X}_Z^{\rm NL}$ needs to be combined with the blurred image $Y$ using \eqref{eq:PLEadditiveNoise} with $\bA$ of \eqref{eq:PLEadditiveNoiseA_app}. As discussed in Section~\ref{sec:PLMMSEtwoObservations}, this can be carried out very efficiently if $p_i=p$ and $\sigma_{1,i}=\sigma_1$ for all $i$. For the sake of efficiency\footnote{The exact solution involving \eqref{eq:PLEadditiveNoiseA_app} can be computed by using iterative techniques for matrix inversion, in which each iteration comprises filtering operations and forward and inverse wavelet transforms. However, we found that in most cases this approach leads to improvement of only $0.2$dB-$0.6$dB in PSNR and is much more demanding computationally.}, we therefore abandon the assumption that $p_i$ and $\sigma_{1,i}$ vary across wavelet levels and assume henceforth that all wavelet coefficients are independent and identically distributed. In this case, $\bA$ corresponds to the filter
\begin{equation}
A(\omega) = \frac{(\sigma_W^2 - \beta) H^*(\omega)}{(\sigma_W^2-\beta)|H(\omega)|^2 + \sigma_U^2},
\end{equation}
where $H(\omega)$ is the frequency response of the blur kernel. Consequently, the final PLMMSE estimate corresponds to the inverse Fourier transform of
\begin{equation}
\hat{X}_{\rm PLMMSE}^{\rm F}(\omega) = \frac{(\sigma_W^2 - \beta) H^*(\omega) Y^{\rm F}(\omega) + \sigma_{U}^2\hat{X}^{\rm F}_Z(\omega)} {(\sigma_W^2-\beta)|H(\omega)|^2 + \sigma_U^2},
\end{equation}
where $Y^{\rm F}(\omega)$ and $\hat{X}^{\rm F}_Z(\omega)$ denote the Fourier transforms of $Y$ and $\hat{X}_Z^{\rm NL}$, respectively. In our experiment, we assumed that the blur $H(\omega)$ and noise variance $\sigma_U^2$ are known. In practice, they can be estimated from $Y$ and $Z$, as proposed in \cite{YSQS07}. This stage also requires knowing the scalars $\sigma^2_W=\EE[W^2]$ and $\beta=\EE[f^2(\tilde{z})]$, which we estimate as
\begin{align}\label{eq:approxBeta}
\widehat{\sigma_W^2}=\frac{1}{M}\sum_{i=1}^M \tilde{z}_i^2 - \sigma_V^2,\quad \widehat{\beta}=\frac{1}{M}\sum_{i=1}^M f^2(\tilde{z}_i).
\end{align}

Figure~\ref{fig:deblur} demonstrates our approach on the $512\times 512$ Gold-hill image. In this experiment, the blur corresponded to a Gaussian kernel with standard deviation $3.2$.
To model a situation in which the noise in $Y$ is due only to quantization errors, we chose $\sigma_U=1/\sqrt{12}\approx 0.3$ and $\sigma_V=45$. These parameters correspond to a peak signal to noise ratio (PSNR) of $25.08$dB for the blurred image and $15.07$dB for the noisy image.

We used the orthogonal Symlet wavelet of order $4$ and employed $10$ EM iterations to estimate $p^\ell$ and $\sigma_{1,\ell}^2$ in each wavelet level. The entire process takes $1.1$ seconds on a Dual-Core $3$GHz computer with un-optimized Matlab code. We note that our approach can be viewed as a smart combination of Wiener filtering for image debluring and wavelet thresholding for image denoising, which are among the simplest and fastest methods available. Consequently, the running time is at least an order of magnitude faster than any other sparsity-based method, including the Bayesian approaches FBMP \cite{SPZ08}, BCS \cite{JXC08} and SBL \cite{T01} and fast $\ell_1$ minimization algorithms such as NESTA \cite{BBC09}, GPSR \cite{FNW07} and Bergman iterations \cite{YOGD08}. As an example, the authors of \cite{SPZ08} reported that FBMP requires a runtime of $38$ minutes to recover a $128\times 128$ image from a few thousands of measurements and that GPSR requires $2.7$ minutes for the same task. BCS \cite{JXC08} was reported to require $15$ seconds for reconstructing a $512\times 512$ image from a few thousands of samples.

As can be seen in Figure~\ref{fig:deblur}, the quality of the recoveries corresponding to the denoised image $\hat{X}_Z^{\rm NL}$ and deblurred image $\hat{X}_Y^{\rm L}$ is rather poor with respect to the state-of-the-art BM3D debnoising method \cite{DFKE07} and BM3D debluring algorithm \cite{DFKE08}. Nevertheless, the quality of the joint estimate $\hat{X}_{\rm PLMMSE}$ surpasses each of these techniques alone. The residual deconvolution (RD) algorithm\footnote{In our setting this method does not produce ringing effects and thus the additional de-ringing stage proposed in \cite{YSQS07} was not applied.} proposed in \cite{YSQS07} for joint debluring and denoising slightly outperforms the PLMMSE method in terms of recovery error.

A quantitative comparison on several test images is provided in Table~\ref{tbl:comparison}. This comparison shows that the PSNR attained by the PLMMSE method is, on average, $0.3$dB higher than BM3D debluring, $0.4$dB higher than BM3D denoising, and $0.8$dB lower than RD. In terms of running times, however, our method is, on average, $11$ times faster than BM3D deblurring, $16$ times faster than BM3D denoising and $18$ times faster than RD. Note that RD requires initialization with a denoised version of $Z$, for which purpose we used the BM3D algorithm. Consequently, the running time reported in the last column of Table~\ref{tbl:comparison} includes the running time of the BM3D denoising method.

\begin{figure*}[t]\centering
\subfloat[]{\includegraphics[width=0.33\textwidth, trim=2cm 1.5cm 2cm 0.5cm, clip]{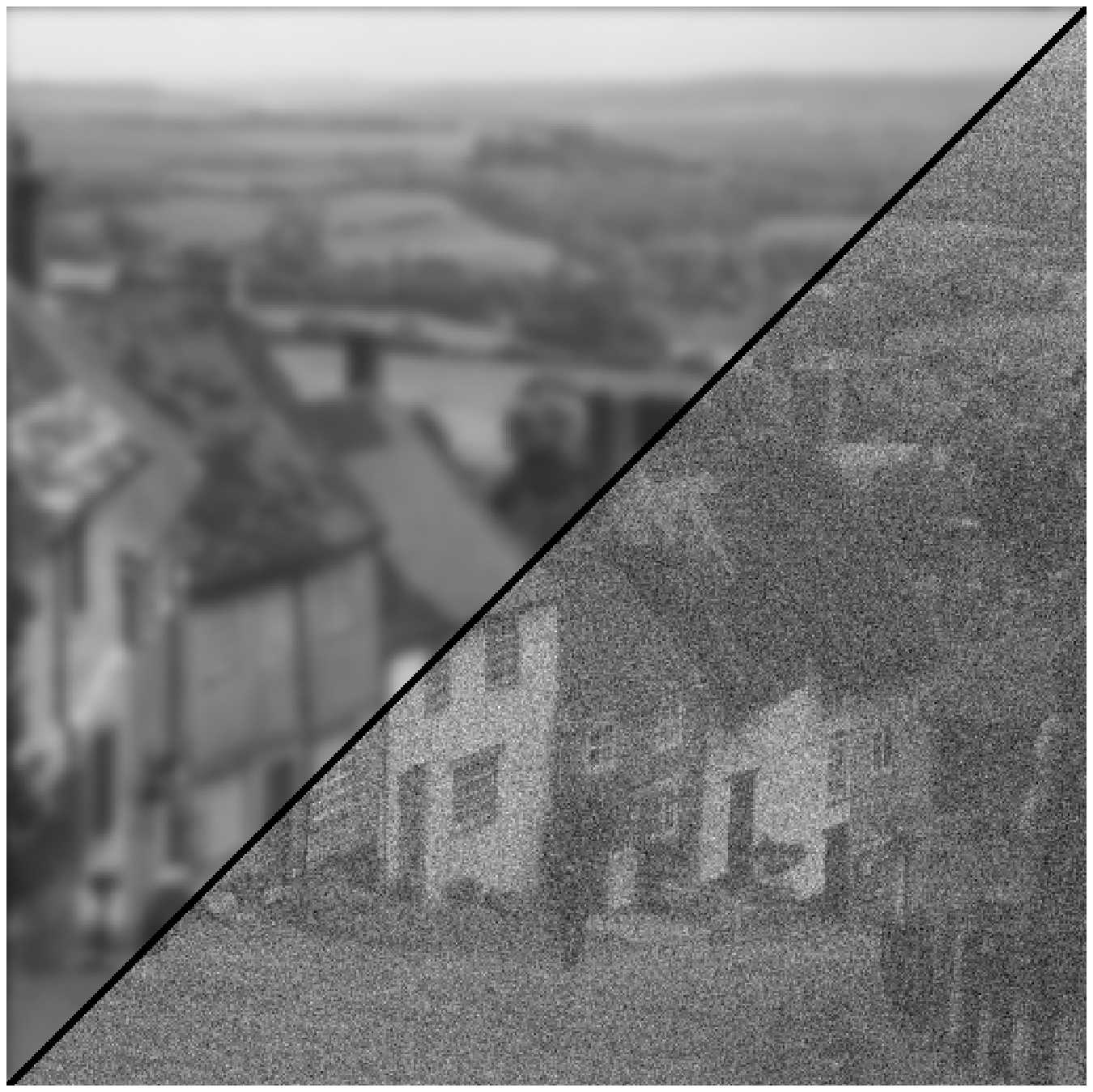} \label{fig:YZ}}
\subfloat[]{\includegraphics[width=0.33\textwidth, trim=2cm 1.5cm 2cm 0.5cm, clip]{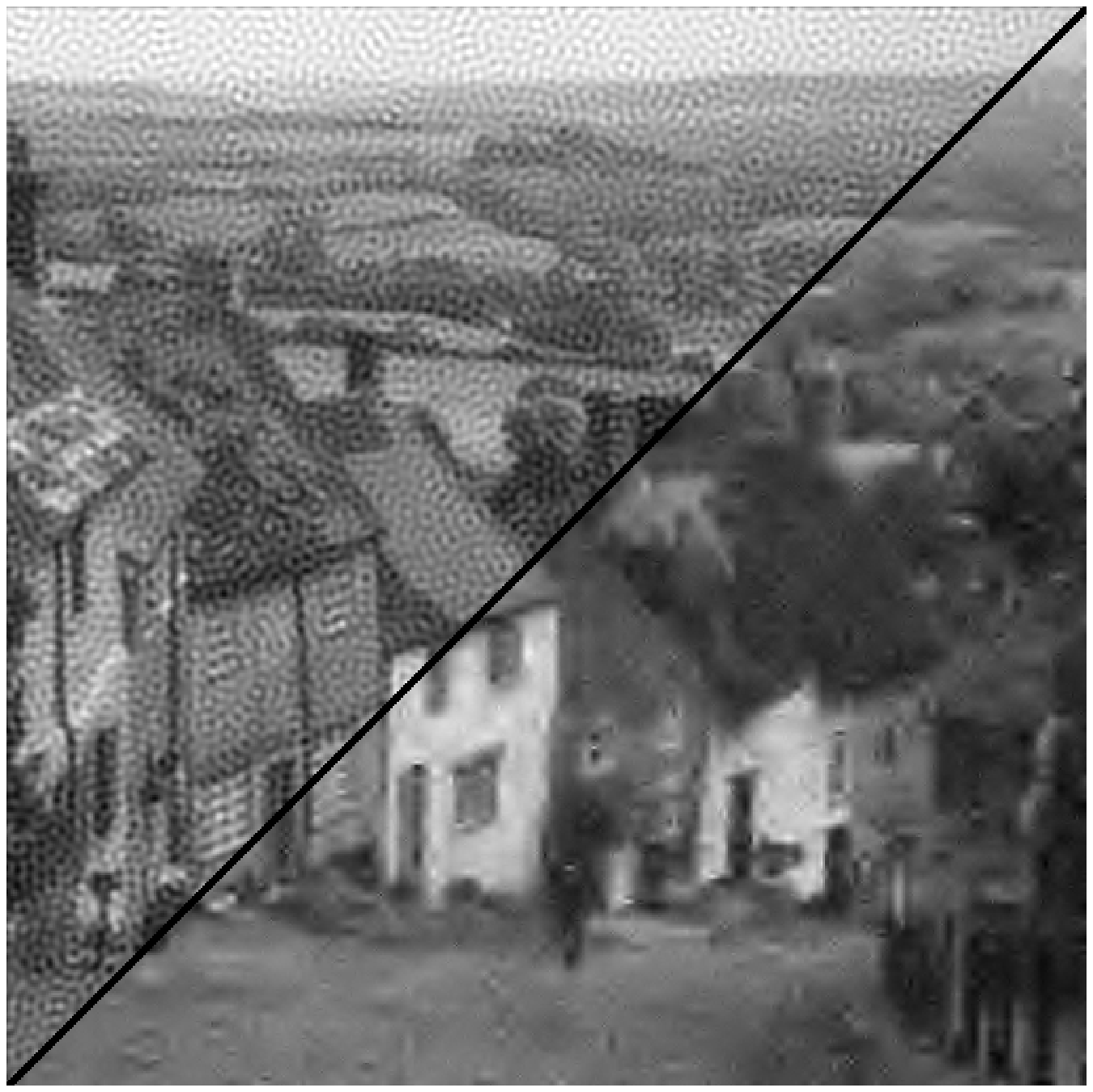} \label{fig:LMMSEy_MMSEz}}
\subfloat[]{\includegraphics[width=0.33\textwidth, trim=2cm 1.5cm 2cm 0.5cm, clip]{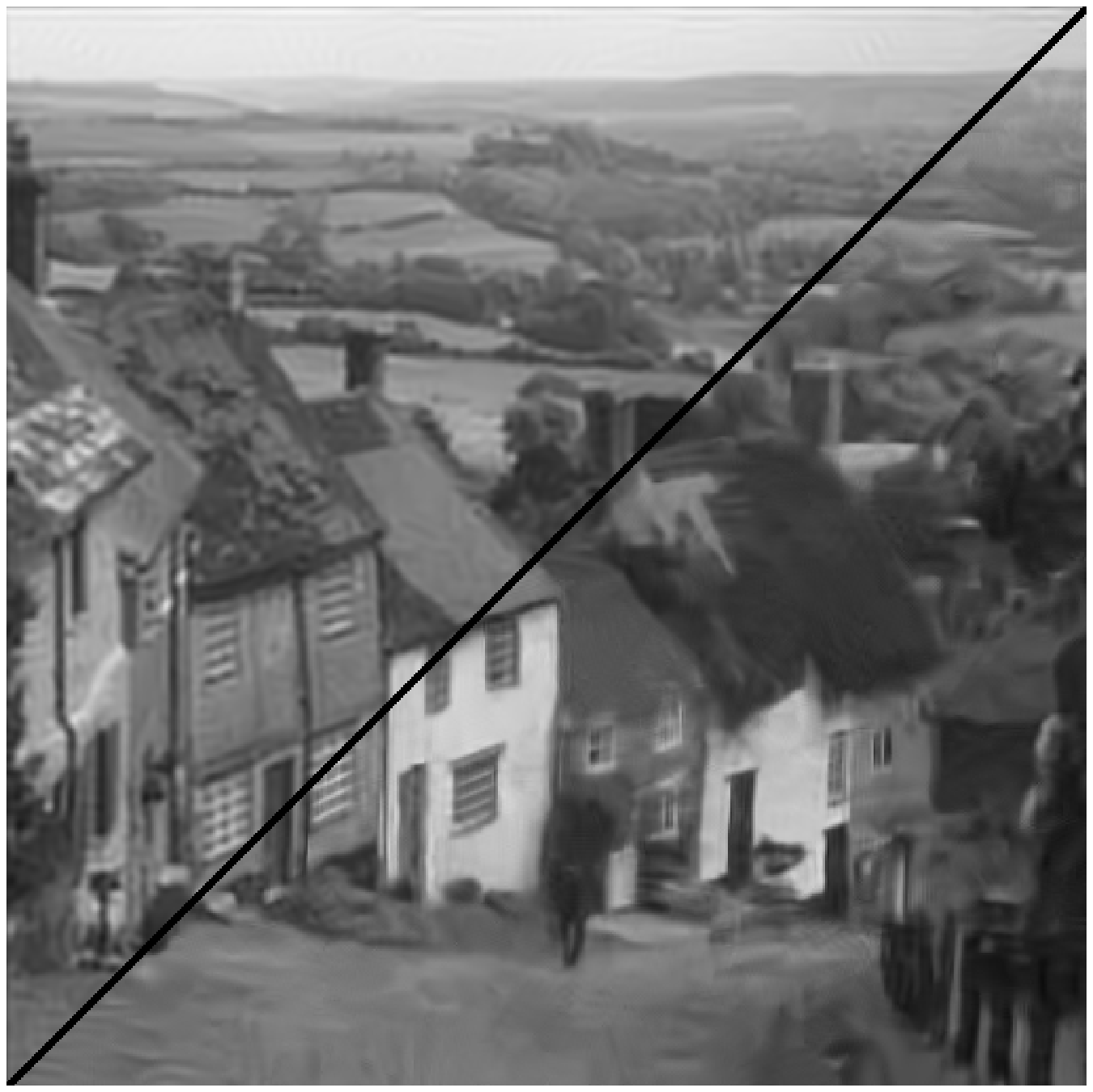} \label{fig:BM3Dy_BM3Dz}}\\
\subfloat[]{\includegraphics[width=0.33\textwidth, trim=2cm 1.5cm 2cm 0.5cm, clip]{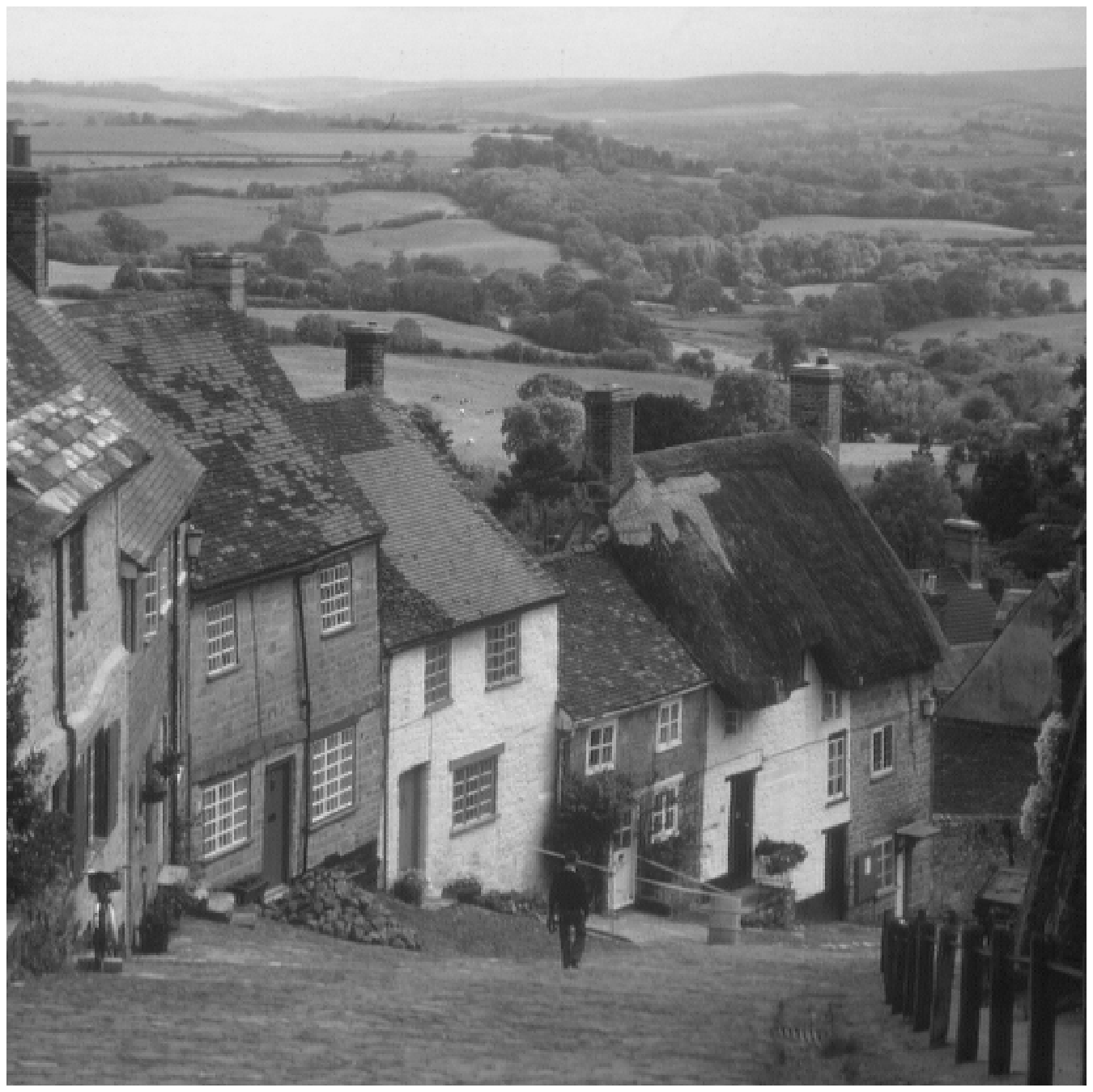} \label{fig:X}}
\subfloat[]{\includegraphics[width=0.33\textwidth, trim=2cm 1.5cm 2cm 0.5cm, clip]{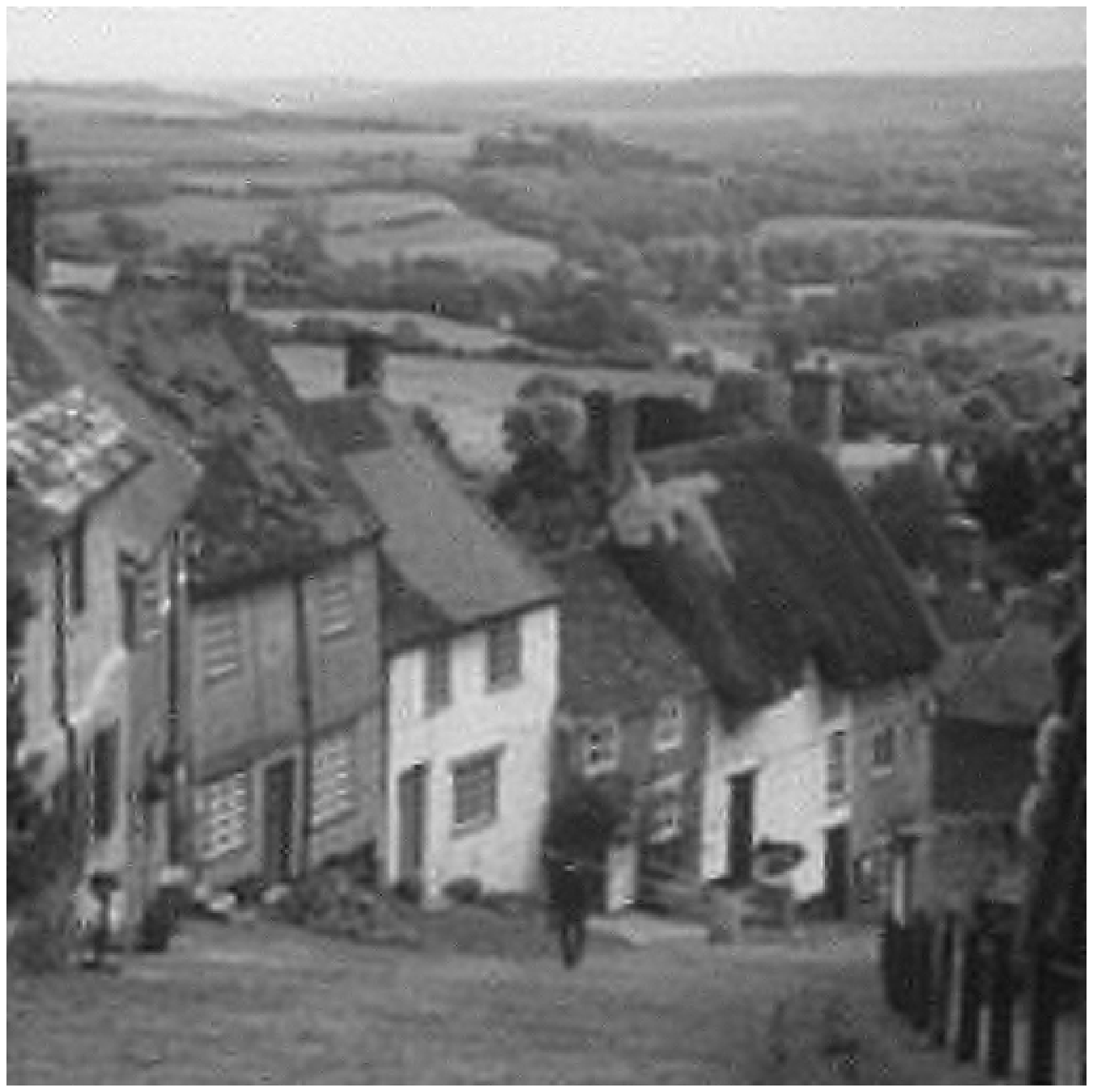} \label{fig:PLMMSEyz}}
\subfloat[]{\includegraphics[width=0.33\textwidth, trim=2cm 1.5cm 2cm 0.5cm, clip]{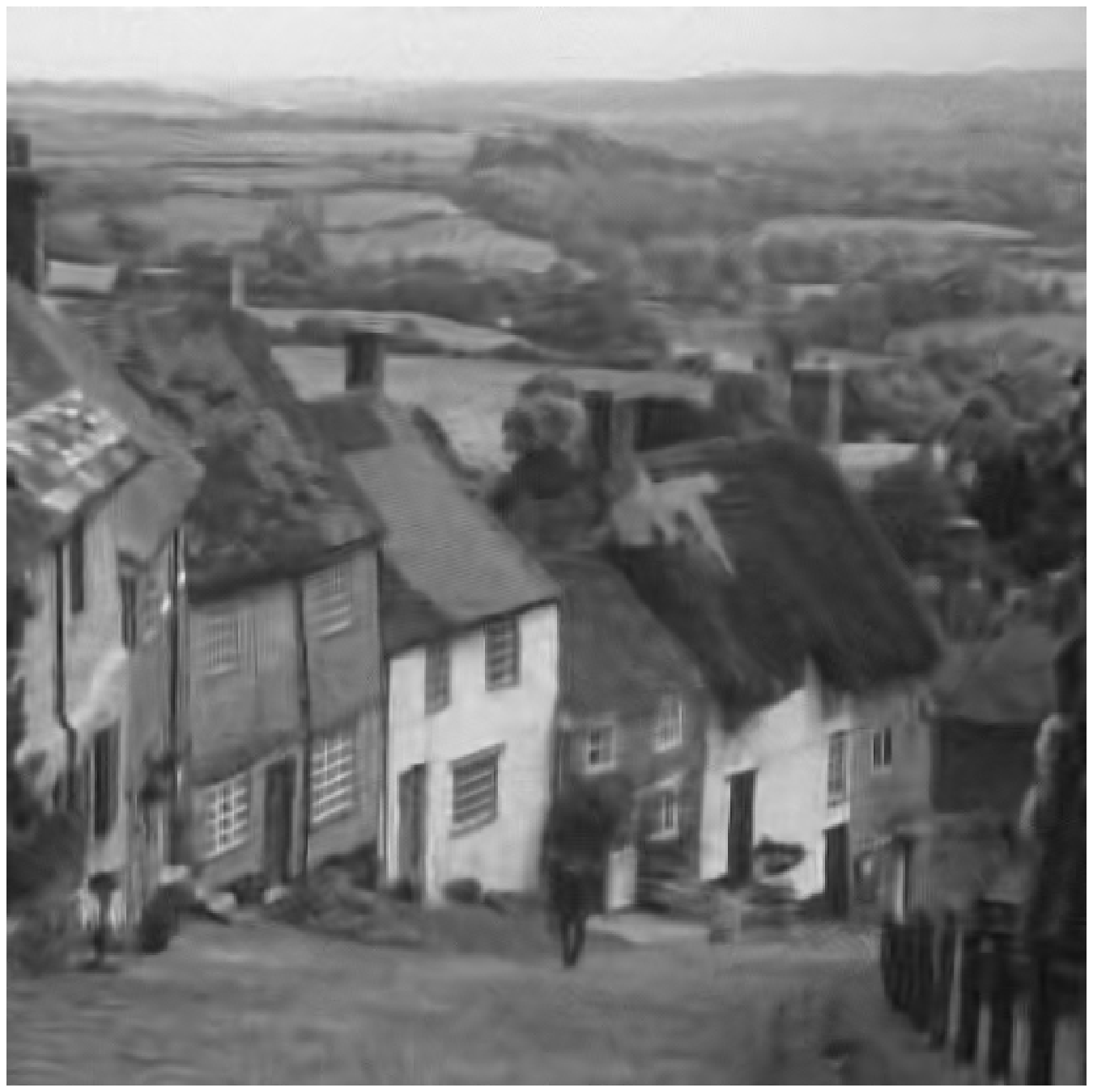} \label{fig:YUANyz}}
\caption{Debluring with a blurred/noisy image pair using PLMMSE estimation and RD \cite{YSQS07}. (a)~Blurred image $Y$ (top left) and noisy image $Z$ (bottom-right). (b)~LMMSE-deblurred image $\hat{X}_Y^{\rm L}$ (top-left) and MMSE-denoised image $\hat{X}_Z^{\rm NL}$ (bottom-right). (c)~BM3D-deblurred image (top left) and BM3D-denoised image (bottom-right). (d)~Original image $X$. (e)~PLMMSE estimate $\hat{X}^{\rm PL}$ from $Y$ and $Z$. (f)~RD recovery from $Y$ and $Z$.}
\label{fig:deblur}
\end{figure*}

%

\begin{table*}
\caption{Performance of deblurring/denoising on several images. Numbers on the left and right of the slash indicate, respectively, PSNR in dB and running time in seconds.}
\label{tbl:comparison}
\centering
\begin{tabular}{l||c|c|c|c|c|c|}
                           & $\hat{X}_Z^{\rm NL}$  & $\hat{X}^{\rm L}_Y$ & BM3D Denoising    & BM3D Deblurring   & PLMMSE           & RD \\ \hline\hline
Boat ($512\times 512$)     & $25.39$ / $0.83$ & $23.45$ / $0.06$    & $27.85$ / $13.52$ & $28.40$ / $10.23$ & $28.05$ / $0.88$ & $29.22$ / $15.31$ \\
Lena ($512\times 512$)     & $26.93$ / $0.73$ & $24.59$ / $0.03$    & $29.47$ / $13.22$ & $30.58$ / $8.90$ & $30.58$ / $0.81$ & $31.37$ / $15.19$ \\
Mandrill ($512\times 512$) & $21.40$ / $0.64$ & $20.59$ / $0.06$    & $22.72$ / $13.58$ & $21.78$ / $9.57$ & $22.58$ / $0.72$ & $23.30$ / $15.58$ \\
Peppers ($512\times 512$)  & $26.74$ / $0.81$ & $24.89$ / $0.08$    & $29.49$ / $13.14$ & $29.74$ / $8.91$ & $29.80$ / $0.88$ & $31.52$ / $15.03$ \\
Mountain ($640\times 480$) & $19.23$ / $0.95$ & $17.69$ / $0.09$    & $20.11$ / $15.24$ & $18.45$ / $11.12$ & $20.03$ / $1.05$ & $20.42$ / $17.47$ \\
Frog ($621\times 498$)     & $23.23$ / $0.94$ & $22.35$ / $0.16$    & $24.00$ / $16.07$ & $24.40$ / $13.37$ & $24.69$ / $1.09$ & $24.69$ / $21.14$\\
Gold-hill ($512\times 512$)& $25.90$ / $0.69$ & $24.26$ / $0.06$    & $27.52$ / $13.41$ & $28.70$ / $9.54$  & $28.82$ / $1.09$ & $29.09$ / $21.14$\\
\hline
Average                    & $24.12$ / $0.81$ & $22.55$ / $0.08$    & $25.88$ / $14.03$ & $26.01$ / $10.23$ & $26.31$ / $0.89$ & $27.09$ / $16.19$
\end{tabular}
\end{table*}

\section{Application to Maneuvering Target Tracking}
\label{sec:tracking}

Next, we demonstrate PLMMSE estimation in the context of maneuvering target tracking. Applications in which there is a need to track the kinematic features of a target are ubiquitous. Often, multiple types of measurements are available. In non-cooperative scenarios, these may include range, bearing, elevation, range rate (Doppler), and more \cite{LJ01}. In navigation applications, measurements may include the signals of global navigation satellite systems and inertial sensors (accelerometers and rate gyros). Such sensors are used in satellites as well as in modern cellular phones, tablet computers and vehicles. Measurements of this type can be fused to aid, \eg autonomous navigation \cite{SND99} or traffic monitoring \cite{MPR08}.

%

To model the tracking problem one usually defines a state vector $X(k)$ comprising the target kinematic data, which evolves via the following stochastic linear equation
\begin{align}\label{eq:stateEv}
X(k+1) = \bF_k X(k) + \bB_k W(k).
\end{align}
Here, $\{W(k)\}$ is a zero-mean white noise sequence satisfying $\Cov(W(k))=\sigma_W^2\bI$ for all $k$ and $\{\bF_k\}$ and $\{\bB_k\}$ are known deterministic matrices. For the simplicity of the exposition, we assume that $X(0)=0$ (modification to other initializations is trivial). Suppose that two sets of measurements of the state are observed, so that
\begin{align}\label{eq:measEv}
\begin{pmatrix} Y(k) \\ Z(k) \end{pmatrix} &= \begin{pmatrix}\bH_k \\ \bG_k\end{pmatrix} X(k) + \begin{pmatrix}U(k) \\ V(k)\end{pmatrix},
\end{align}
where $\{U(k)\}$ and $\{V(k)\}$ are mutually independent zero-mean white noise sequences satisfying $\Cov(U(k))=\sigma_U^2\bI$ and $\Cov(V(k))=\sigma_V^2\bI$, and $\{\bH_k\}$ and $\{\bG_k\}$ are given matrices.

At the $n$th time instant, the goal is to obtain an estimate $\hat{X}(n)$ of $X(n)$ based on the measurements $\{Y(k)\}_{k=1}^n$ and $\{Z(k)\}_{k=1}^n$. Ideally, we would like our estimation scheme to possess a recursive structure such that $\hat{X}(n)$ is computed from the previous estimate $\hat{X}(n-1)$ and the current measurements $Y(n)$ and $Z(n)$ without needing to store the entire measurement history.

A simple, yet popular method for modeling maneuvering targets is the dynamic multiple-model method~\cite{Trk_Nav2001YBS} in which $W(k)$ follows a Gaussian mixture distribution. In this case, low intensity noise represents the nominal, non-maneuvering motion regime of the target, and high intensity process noise represents abrupt maneuvers characterized by increased model uncertainty, and caused by, \eg faults in the actuators of an autonomous aerospace system. Unfortunately, the MMSE solution does not admit a recursive implementation~\cite{AMGC02} in this setting.

One alternative is to resort in these cases to LMMSE estimation, whose recursive implementation is given by the Kalman filter. Another option is to employ approximations of the MMSE estimate, which can be computed recursively, such as the interacting-multiple-model (IMM) filter \cite{BB88}. The performance of these methods tends to depend heavily on the assumption that the measurement noises are Gaussian. When their actual distribution is unknown, their performance deteriorates.

Sometimes, nonetheless, the MMSE estimate can be calculated in an online manner. As shown below, this happens, \eg when the state evolves according to the \emph{white acceleration model} \cite{LJ03} and available are acceleration measurements. When supplied with two sets of measurements, only one of which allowing recursive MMSE estimation, it may be advantageous to use PLMMSE estimation rather than approximate MMSE solutions. In this case, under some mild conditions, the PLMMSE estimate can be updated recursively, similarly to the Kalman filter.

Suppose that the distribution of $\{V(k)\}$ and $\{W(k)\}$ is such that, for any $k<n$, $\EE[W(k)|Z(1),\ldots,Z(n)]=\EE[W(k)|Z(k+1)]$ and that the RVs $\{\EE[W(k)|Z(k+1)]\}$ are uncorrelated. As we discuss in the sequel, this implies that the MMSE estimate $\hat{X}_Z^{\rm NL}(n)$ can be computed recursively from $\hat{X}_Z^{\rm NL}(n-1)$ and $Z(n)$. Our goal is to compute the PLMMSE estimate $\hat{X}^{\rm PL}(n)$ of $X(n)$ from $\{Y(k)\}_{k=1}^n$ and $\{Z(k)\}_{k=1}^n$. To obtain a recursive implementation, it is insightful to examine first the batch PLMMSE solution. To this end, we define
\begin{align}
X &= \begin{pmatrix}X(1) \\ \vdots \\ X(n)\end{pmatrix},\quad Y = \begin{pmatrix}Y(1) \\ \vdots \\Y(n)\end{pmatrix},\quad Z = \begin{pmatrix}Z(1) \\\vdots \\ Z(n)\end{pmatrix}, \nonumber\\
U &= \begin{pmatrix}U(1) \\ \vdots \\ U(n)\end{pmatrix},\quad V = \begin{pmatrix} V(1) \\ \vdots \\ V(n)\end{pmatrix},\quad W = \begin{pmatrix}W(0) \\ \vdots \\ W(n-1)\end{pmatrix}.
\end{align}
Therewith, we have from~\eqref{eq:stateEv} that
\begin{align}
X = \bPsi W, \quad Y = \bH X + U, \quad Z = \bG X + V,
\end{align}
where
\begin{align}
\bPsi =
\begin{pmatrix}
\bB_0                                  & 0                                      & \cdots & 0 \\
\bF_1\bB_0                             & \bB_1                                  & \cdots & 0 \\
\bF_2\bF_1\bB_0                        & \bF_2\bB_1                             & \cdots & 0 \\
\vdots                                 & \vdots                                 & \ddots & \vdots\\
\prod_{k=1}^{n-1}\limits\bF_{n-k}\bB_0 & \prod_{k=1}^{n-2}\limits\bF_{n-k}\bB_1 & \cdots & \bB_{n-1}
\end{pmatrix},
\end{align}
\begin{align}
\bH = {\rm diag}\begin{pmatrix} \bH_1  & \bH_2  & \cdots &  \bH_n \end{pmatrix},
\end{align}
and
\begin{align}
\bG = {\rm diag}\begin{pmatrix} \bG_1  & \bG_2  & \cdots &  \bG_n \end{pmatrix}.
\end{align}
Therefore, as we have seen in \eqref{eq:PLEadditiveNoise}, the batch PLMMSE estimate of $X$ from $Y$ and $Z$ is given in this case by
\begin{equation}
\EE[X|Z] + \bA(Y - \bH\EE[X|Z])
\end{equation}
with the matrix $\bA$ of \eqref{eq:PLEadditiveNoiseA_app}.

Noting that $\EE[X|Z]=\bPsi\EE[W|Z]$ and taking into account our assumption that $\EE[W(k-1)|Z]=\EE[W(k-1)|Z(k)]$ and the block lower-triangular structure of $\bPsi$, we see that the $k$th element in the vector $\EE[X|Z]$ is given by
\begin{equation}
\hat{X}_Z^{\rm NL}(k) = \bF_{k-1}\hat{X}_Z^{\rm NL}(k-1) + \bB_{k-1}\EE[W(k-1)|Z(k)].
\end{equation}
We have thus obtained a recursive computation of $\hat{X}_Z^{\rm NL}(k)$ from $\hat{X}_Z^{\rm NL}(k-1)$ and $Z(k)$.

Next, it remains to determine whether the operation of $\bA$ admits a recursive implementation. Before we do so, we recall that the LMMSE estimate of $X$ from $Y$, which is given by
\begin{align}\label{eq:LMMSE_Kalman}
\hat{X}^Y_{\rm L} = \bGa_{XX} \bH^T \left(\bH\bGa_{XX}\bH^T + \sigma_U^2\bI\right)^\dag Y
\end{align}
in this case, can be implemented recursively via the Kalman filter. In our setting, $\bGa_{XX} = \sigma_W^2\bPsi\bPsi^T$ and, due to the assumption that $\Cov(\EE[W|Z]) = \beta\bI$, we have that $\bGa_{\hat{X}_Z^{\rm NL}\hat{X}_Z^{\rm NL}} = \beta\bPsi\bPsi^T = ({\beta}/{\sigma_W^2})\bGa_{XX}$. Therefore, the matrix $\bA$ of the PLMMSE estimate reduces from \eqref{eq:PLEadditiveNoiseA} to
\begin{align}
&\left(1-\frac{\beta}{\sigma_W^2}\right)\!\bGa_{XX} \bH^T \left(\!\left(1-\frac{\beta}{\sigma_W^2}\right)\!\bH\bGa_{XX}\bH^T + \sigma_U^2\bI\right)^\dag \nonumber\\
&\hspace{2cm} = \bGa_{XX} \bH^T \left(\bH\bGa_{XX}\bH^T + \frac{\sigma_W^2\sigma_U^2}{\sigma_W^2-\beta}\bI\right)^\dag.
\end{align}
We see that this matrix is the same as that appearing in \eqref{eq:LMMSE_Kalman}, except for the noise variance which is multiplied here by $\sigma_W^2/(\sigma_W^2-\beta)$. This implies that multiplication by $\bA$ corresponds to a Kalman filter with higher observation noise.

We conclude that the complete recursive PLMMSE implementation comprises the following steps:
\begin{enumerate}
\item[a)] {\it Initialization:} $\hat{X}_Z^{\rm NL}(0)=0$, $\tilde{X}(0)=0$, $\bP_{0}=0$.
\item[b)] {\it Recursion:} For $k =1,2,\ldots$ perform the routine summarized in Alg.~\ref{alg:Recursive}.
\end{enumerate}

\renewcommand{\algorithmicrequire}{\textbf{Input:}}
\renewcommand{\algorithmicensure}{\textbf{Output:}}
\begin{algorithm}[t]\caption{One Cycle of the Recursive PLMMSE.}
\begin{algorithmic}[1]
\REQUIRE{$Y(k),Z(k),\hat{X}_Z^{\rm NL}(k-1),\tilde{X}(k-1),\bP_{k-1}$}
\STATE{$\hat{W}_Z^{\rm NL}(k-1) = \EE[W(k-1)|Z(k)]$.}
\STATE{$\hat{X}_Z^{\rm NL}(k) = \bF_{k-1}\hat{X}_Z^{\rm NL}(k-1) + \bB_{k-1}\hat{W}_Z^{\rm NL}(k-1)$.}
\STATE{$\hat{Y}_Z^{\rm NL}(k) = \bH_k\hat{X}_Z^{\rm NL}$.}
\STATE{$\tilde{Y}(k) = Y(k) - \hat{Y}_Z^{\rm NL}(k)$.}
\STATE{Update $\tilde{X}(k-1),\bP_{k-1}$ using $\tilde{Y}(k)$ via a Kalman step:
\begin{align*}
\bP_{k}^-&=\bF_{k-1}\bP_{k-1}\bF_{k-1}+\sigma_W^2\bB_k\bB_k^T\\
\bK_{k}&=\bP_{k}^-\bH_k^T\left(\bH\bP_{k}^-\bH_k^T+\frac{\sigma_W^2\sigma_U^2}{\sigma_W^2-\beta}\bI\right)^\dag\\
\bL_k&=(\bI-\bK_k\bH_k)\bA_{k-1}\\
\tilde{X}(k)&=\bL_k\tilde{X}(k-1)+\bK_k\tilde{Y}(k)\\
\bP_k&=\bP_{k}^-\bK_k\left(\bH_k\bP_{k}^-\bH_k^T+\frac{\sigma_W^2\sigma_U^2}{\sigma_W^2-\beta}\bI\right)\bK_k^T
\end{align*}
}
\STATE{$\hat{X}^{\rm PL}(k) = \hat{X}_Z^{\rm NL}(k) + \tilde{X}(k)$.}
\ENSURE{$\hat{X}^{\rm PL}(k),\hat{X}_Z^{\rm NL}(k),\tilde{X}(k),\bP_k$}
\end{algorithmic}
\label{alg:Recursive}
\end{algorithm}

\subsection{Example: Tracking a Maneuvering Target From Position and Acceleration Observations}

A common way of describing target kinematics is via the \emph{nearly-constant-velocity model}, also known as the \emph{white acceleration model} \cite{LJ03}. Focusing on one dimension for simplicity, and denoting by $P(k)$ the position of the target at time $kT$, the state vector $X(k) = \begin{pmatrix}P(k) & P(k-1) & P(k-2)\end{pmatrix}^T$ in this model evolves according to \eqref{eq:stateEv} with
\begin{equation}
\bF_k = \begin{pmatrix} 2 & -1 & 0 \\ 1 & 0 & 0 \\ 0 & 1 & 0 \end{pmatrix}.
\end{equation}
If the sampling interval $T$ is small, then, to close extent, $\bB_k=\begin{pmatrix}1 & 0 & 0\end{pmatrix}^T$ (see \eg \cite{LJ03}). The driving noise $W_k$ in this model corresponds to the target's acceleration.

In many applications~\cite{sigalov:oshman:fusion11:actuator}, the target's velocity changes gradually most of the time apart for abrupt transients, which occur every once in a while. This behavior can be described by letting $W_k = S_kB_k$, where $B_k\sim\N(0,1)$ and $\PP(S_k=\sigma_1)=1-\PP(S_k=\sigma_2)=p$. If $\sigma_1$ is much larger than $\sigma_2$ and $p$ is small then the mean time between consecutive large acceleration events is large. In the terminology of the multiple model approach mentioned above, each of the components in the Gaussian mixture corresponds to a different dynamic model.

Suppose that we observe noisy measurements $Y(k)$ and $Z(k)$ of the position $P(k)$ and acceleration $W(k-1)$, respectively. These measurements relate to the state vector via \eqref{eq:measEv}, with $\bH_k = \begin{pmatrix} 1 & 0 & 0 \end{pmatrix}$ and $\bG_k = \begin{pmatrix} 1 & -2 & 1 \end{pmatrix}$. Indeed, it is easily verified that in this setting $Z(k) = W(k-1)+V(k)$. Consequently, for $n>k$, $\EE[W(k)|Z(1),\ldots,Z(n)]=\EE[W(k)|Z(k+1)] = f(Z(k+1))$ with the function $f(\cdot)$ of \eqref{eq:f}.

Equipped with the matrices $\bF_k$, $\bB_k$, $\bH_k$, and $\bG_k$, we estimate the state vector of~\eqref{eq:stateEv} using the recursive PLMMSE method described above. Note that although $X(k)$ comprises only the target's positions at times $k$, $k-1$, and $k-2$, estimating the velocity and acceleration is straightforward as $\hat{P}(k)-\hat{P}(k-1)$ and $\hat{P}(k)-2\hat{P}(k-1)+\hat{P}(k-2)$, respectively.

We compare the performance of the recursive PLMMSE method with the one of a standard KF which provides, recursively, the LMMSE optimal estimate of the state using the measurement sets $\{Z(k)\}$ and $\{Y(k)\}$, as well as with that of the IMM filter~\cite{BB88} which is known to be extremely effective in multiple model estimation problems~\cite{Trk_Nav2001YBS}. The main idea underlying the IMM algorithm is to maintain a bank of primitive Kalman filters, each matched to a different model in the given model set. Each filter produces a local estimate with an associated error covariance using its initial estimate and covariance and the current measurement. The key element of the IMM scheme is the interaction block that generates, using all local estimates, individual initial conditions for each of the primitive filters in the bank. In our case, the two models maintained by the IMM method correspond to the two possible realizations of $S_k$ such that one model corresponds to the low process noise variance representing the nominal target regime and the second model for the maneuvering one.

We simulated a random sequence $\{X(k)\}$ for $k=1,\ldots,1000$ according to \eqref{eq:stateEv} initialized at $X(0)=0$ and driven by a process noise $\{W(k)\}$ having, at each time, a two-modal Gaussian mixture distribution, with $p=0.05$, $\sigma_{1}=10$, and $\sigma_{2}=1$. The state position is observed via a Gaussian measurement equation with $\sigma_U=5$ and the covariance of the Gaussian measurement noise of the acceleration, $\sigma_V$, is swept from $1$ to $15$. Averaged over $500$ independent Monte Carlo runs, the average squared position, velocity, and acceleration errors are presented, respectively, on the top, middle, and bottom of the left chart of Fig.~\ref{fig:track}. Outperformed by the IMM, the PLMMSE method scores better than the Kalman filter since it optimally utilizes the acceleration measurements. It is noticeable that at high values of $\sigma_V^2$, Kalman's and PLMMSE's errors coincide indicating that acceleration measurements do not carry valuable information in addition to that carried by $\{Y(k)\}$.

In many practical scenarios the distribution of the position measurement noise is far from Gaussian (see \eg \cite{IS10} and references therein). On the right chart of Fig.~\ref{fig:track} we present the position, velocity and acceleration errors obtained for a Gaussian mixture distribution of the position noise having the same first- and second-order statistics as before. Such a distribution may model occasional outlier measurements or sensor faults~\cite{mazor1998interacting}. None of the filters is supplied with this information and only the first- and second-order moments are provided to the algorithms. Utilizing the position measurements in a linear manner, both Kalman and PLMMSE keep the performance unchanged relative to the Gaussian case. In a contradistinction, the IMM algorithm results in an inferior performance. This phenomenon is tightly related to the statement of Theorem~\ref{thm:PLMMSEminimax} claiming that the PLMMSE method is ensured to attain a smaller worst-case MSE in comparison to any other estimation technique provided the appropriate moments are kept constant.


\begin{figure*}[t]\centering
\subfloat[]{\includegraphics[scale=0.6]{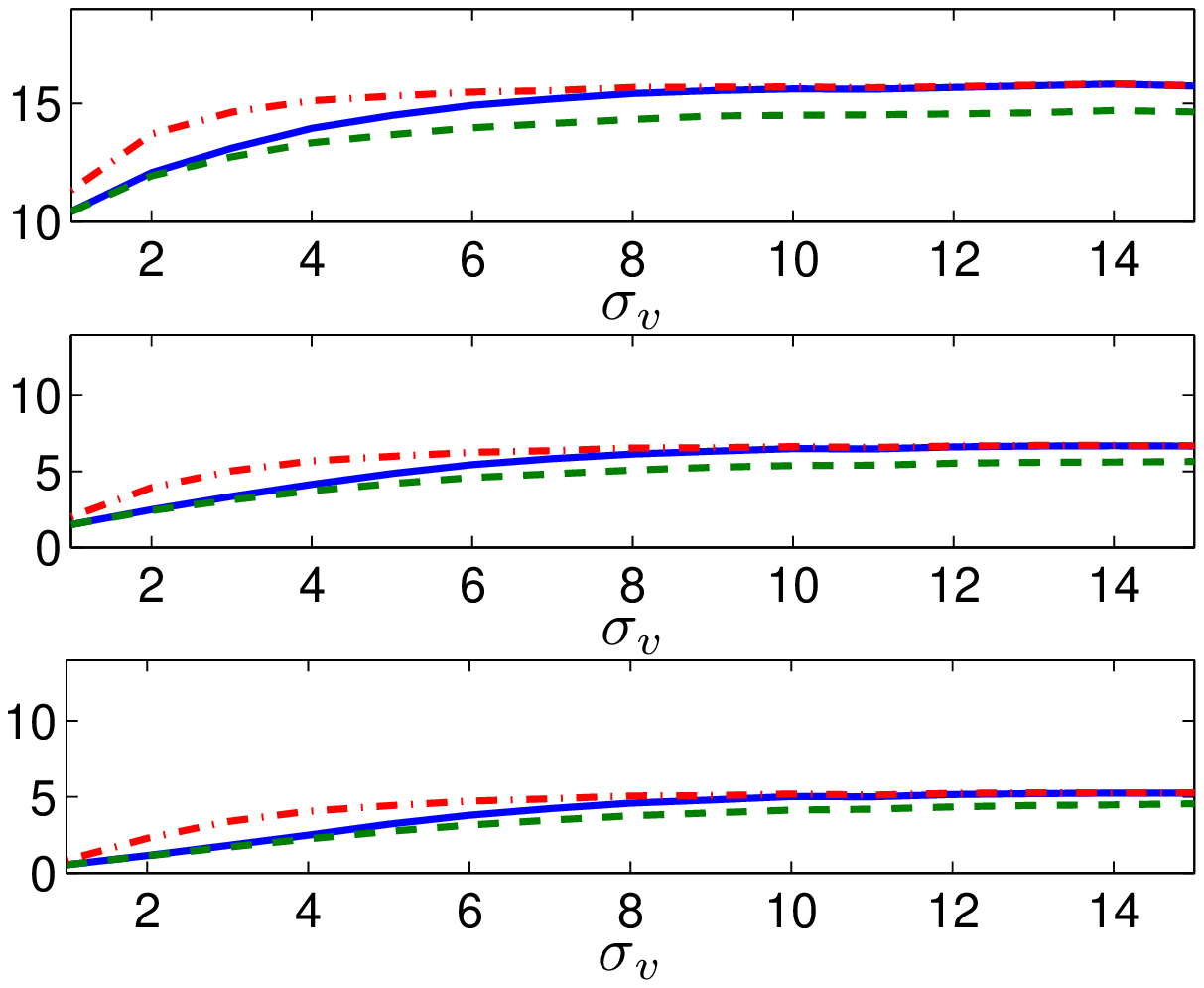} \label{fig:GAUSS}}
\subfloat[]{\includegraphics[scale=0.6]{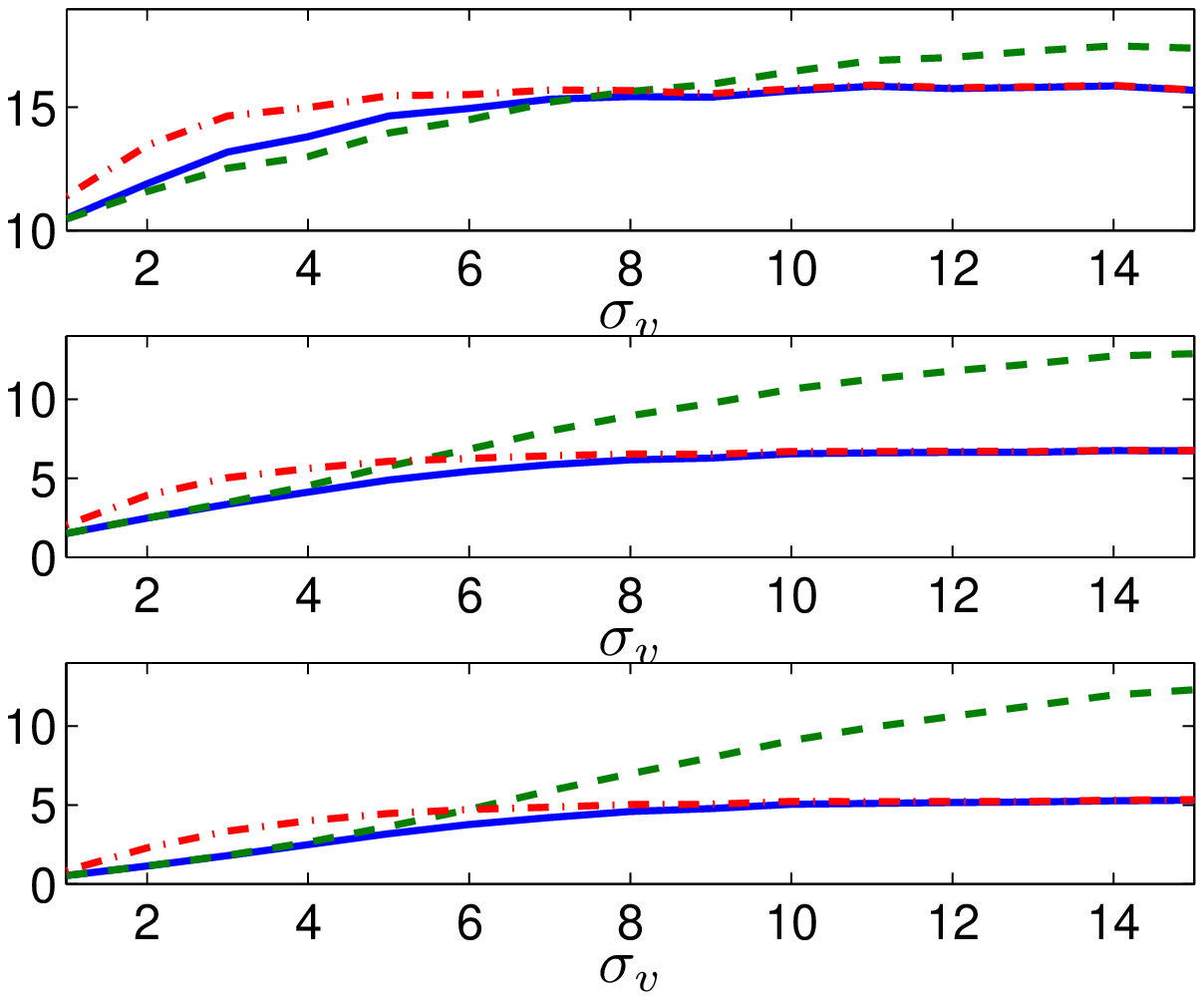} \label{fig:GM}}
\caption{Mean squared estimation errors of the position (top), velocity (middle) and acceleration (bottom) vs. $\sigma_v$ using the PLMMSE approach (solid), IMM (dashed) and standard KF (dash-dotted). (a) $\{U(k)\}$ have a Gaussian distribution. (b)  $\{U(k)\}$ have a Gaussian mixture distribution.}
\label{fig:track}
\end{figure*}

\section{Conclusions}
\label{sec:conclusions}
In this paper we derived the PLMMSE estimator, which is the method whose MSE is minimal among all functions that are linear in $Y$. We showed that the PLMMSE solution depends only on the joint second-order statistics of $X$ and $Y$, which renders it applicable in a wide variety of situations. Furthermore, we showed that this estimator attains the lowest worst-case MSE over the set of distributions whose joint second-order moments of $X$ and $Y$ are fixed. We demonstrated our approach in the context of recovering a vector, which is sparse in a unitary dictionary, from a pair of noisy measurements. In this setting, the PLMMSE solution achieves an MSE very close to that attained by iterative approximation strategies, such as the FBMP method of \cite{SPZ08}, and is much cheaper computationally. We applied our method to the problems of image enhancement from blurred/noisy image pairs and maneuvering target tracking from position and acceleration measurements. In both applications, we showed that PLMMSE estimation performs close to state-of-the-art algorithms. In the image enhancement setting, we showed that it can run much faster than competing approaches. In the context of target tracking, we demonstrated the insensitivity of the solution to the distribution of the noise in $Y$. This property provides robustness against sensor faults and outlier measurements, problems which are very common in target tracking situations.

\appendices

\section{Proof of Theorem~\ref{thm:PL}}
\label{sec:ProofThmPL}
Using the smoothing property, the MSE of any estimator of the form \eqref{eq:PLmodel1} is given by
\begin{equation}
\EE\left[\EE\left[\left\|X- \bA(Z) Y - b(Z)\right\|^2|Z\right]\right].
\end{equation}
Thus, for every specific value $z$ that $Z$ can take, the optimal choice of $\bA(z)$ and $b(z)$ is that minimizing the inner expectation. The solution to this minimization problem corresponds to the LMMSE estimate of $X$ based on $Y$, under the the joint distribution of $(X,Y)$ given $Z$, concluding the proof.

\section{Proof of Theorem~\ref{thm:SPL}}
\label{sec:ProofThmSPL}
We start by noting that the set $\mathcal{B}$ of RVs constituting candidate estimates is a closed linear subspace within the space of finite-second-order-moment RVs taking values in $\RN^M$. Therefore, the MMSE estimate $\hat{X}$ within this subspace, which is the projection of the RV $X$ onto $\mathcal{B}$, is the unique\footnote{In an almost-sure sense.} RV whose estimation error $\hat{X}-X$ is orthogonal to every RV of the form $\bA Y+b(Z)$. To demonstrate that $\hat{X}$ of \eqref{eq:Xhat} indeed satisfies this property, note that the inner product between $\hat{X}-X$ and $\bA Y+b(Z)$ is given by
\begin{align}\label{eq:proof1}
\EE\left[(\hat{X}-X)^T(\bA Y + b(Z))\right] &= \Tr\left\{\EE\left[(\hat{X}-X)Y^T\right]\bA^T\right\}\nonumber\\
 &\hspace{0.4cm} +\Tr\left\{\EE\left[(\hat{X}-X)b(Z)^T\right]\right\}.
\end{align}
Substituting \eqref{eq:Xhat}, the expectation within the second term becomes
\begin{align}
&\EE\left[\left(\bGa_{X\tilde{Y}}\bGa_{\tilde{Y}\tilde{Y}}^\dag \tilde{Y} + \EE[X|Z]-X\right)b(Z)^T\right]  \nonumber\\
&\hspace{0.25cm}=\bGa_{X\tilde{Y}}\bGa_{\tilde{Y}\tilde{Y}}^\dag\EE\left[\tilde{Y}b(Z)^T\right] + \EE\left[\left(\EE[X|Z]-X\right)b(Z)^T\right].
\end{align}
Recall that $\tilde{Y}=Y-\EE[Y|Z]$ is the estimation error incurred in estimating $Y$ from $Z$. Consequently, $\tilde{Y}$ and $X-\EE[X|Z]$ are uncorrelated with every function of $Z$ and, in particular, with $b(Z)$, so that this expression vanishes.
Similarly, substituting \eqref{eq:Xhat} and expressing $Y = \tilde{Y} + \EE[Y|Z]$, the expectation within the first summand in \eqref{eq:proof1} becomes
\begin{align}
&\EE\left[\left(\bGa_{X\tilde{Y}}\bGa_{\tilde{Y}\tilde{Y}}^\dag \tilde{Y} + \EE[X|Z]-X\right)Y^T\right]  \nonumber\\
&\hspace{1cm}=\bGa_{X\tilde{Y}}\bGa_{\tilde{Y}\tilde{Y}}^\dag \EE\left[\tilde{Y}(\tilde{Y}+\EE[Y|Z])^T\right]\nonumber\\
&\hspace{1.4cm}-\EE\left[(X-\EE[X|Z])(\tilde{Y}+\EE[Y|Z])^T\right].
\end{align}
Being a function of $Z$, the RV $\EE[Y|Z]$ is uncorrelated with $\tilde{Y}$ and $X-\EE[X|Z]$ so that this expression can be reduced to
\begin{align}
&\bGa_{X\tilde{Y}}\bGa_{\tilde{Y}\tilde{Y}}^\dag \EE\left[\tilde{Y}\tilde{Y}^T\right]-\EE\left[(X-\EE[X|Z])W^T\right]\nonumber\\
&\hspace{0.3cm}=\bGa_{X\tilde{Y}}\bGa_{\tilde{Y}\tilde{Y}}^\dag \bGa_{\tilde{Y}\tilde{Y}}-\EE\left[(X-\mu_X+\mu_X-\EE[X|Z])\tilde{Y}^T\right]\nonumber\\
&\hspace{0.3cm}=\bGa_{X\tilde{Y}}-\bGa_{X\tilde{Y}}+\EE\left[(\EE[X|Z]-\mu_X)\tilde{Y}^T\right]\nonumber\\
&\hspace{0.3cm}=\bGa_{X\tilde{Y}}-\bGa_{X\tilde{Y}} \nonumber\\
&\hspace{0.3cm}= 0,
\end{align}
where we used the facts that $\EE[\tilde{Y}]=0$, that $\bGa_{X\tilde{Y}}\bGa_{\tilde{Y}\tilde{Y}}^\dag \bGa_{\tilde{Y}\tilde{Y}}=\bGa_{X\tilde{Y}}$ \cite[Lemma 2]{TH02}, and that $\tilde{Y}$ is uncorrelated with $\EE[X|Z]$ (due to the same argument as above). This completes the proof.

\section{Derivation of Equation \eqref{eq:XhatExplicit}}
\label{sec:DerivationOfXhatExplicit}
By definition,
\begin{align}\label{eq:Ga_XW}
\bGa_{X\tilde{Y}} &= \EE[(X-\mu_X)(Y-\EE[Y|Z])^T] \nonumber\\
&= \EE[(X-\mu_X)(Y-\mu_Y+\mu_Y-\EE[Y|Z])^T] \nonumber\\
&= \EE[(X\! -\!\mu_X)(Y\! -\!\mu_Y)^T] -\EE[(X\! -\!\mu_X)(\EE[Y|Z]\! -\!\mu_Y)^T] \nonumber\\
&= \bGa_{XY} - \EE[\EE[(X-\mu_X)(\EE[Y|Z]-\mu_Y)^T|Z]] \nonumber\\
&= \bGa_{XY} - \EE[(\EE[X|Z]-\mu_X)(\EE[Y|Z]-\mu_Y)^T] \nonumber\\
&= \bGa_{XY} - \bGa_{\hat{X}_Z^{\rm NL}\hat{Y}_Z^{\rm NL}},
\end{align}
where $\hat{X}_Z^{\rm NL}=\EE[X|Z]$ and $\hat{Y}_Z^{\rm NL}=\EE[Y|Z]$. Here, the fourth equality is a result of the smoothing property and the last equality follows from the facts that $\EE[\EE[X|Z]]=\mu_X$ and $\EE[\EE[Y|Z]]=\mu_Y$. In a similar manner, it is easy to show that
\begin{equation}\label{eq:Ga_YW}
\bGa_{Y\tilde{Y}} = \bGa_{YY} - \bGa_{\hat{Y}_Z^{\rm NL}\hat{Y}_Z^{\rm NL}}.
\end{equation}
Using \eqref{eq:Ga_YW} and the fact that $\tilde{Y}$ is uncorrelated with $\EE[Y|Z]-\mu_Y$, we obtain
\begin{align}\label{eq:Ga_WW}
\bGa_{\tilde{Y}\tilde{Y}} &= \EE[\tilde{Y}\tilde{Y}^T] \nonumber\\
&= \EE[(Y-\EE[Y|Z])\tilde{Y}^T] \nonumber\\
&= \EE[(Y-\mu_Y)\tilde{Y}^T] - \EE[(\EE[Y|Z]-\mu_Y)\tilde{Y}^T]\nonumber\\
&= \bGa_{Y\tilde{Y}} \nonumber\\
&= \bGa_{YY} - \bGa_{\hat{Y}_Z^{\rm NL}\hat{Y}_Z^{\rm NL}}.
\end{align}
Substituting \eqref{eq:Ga_XW} and \eqref{eq:Ga_WW} into \eqref{eq:Xhat} leads to \eqref{eq:XhatExplicit}.

\section{Proof of Theorem \ref{thm:PLMMSEminimax}}
\label{sec:ProofOptPLMMSE}
Let $\varepsilon(F_{XYZ},\hat{X})=\EE_{F_{XYZ}}[\|\hat{X}-X\|^2]$ denote the MSE incurred by an estimator $\hat{X}$ of $X$ based on $Y$ and $Z$, when the joint distribution of $X$, $Y$ and $Z$ is $F_{XYZ}(x,y,z)$. It is easily verified that
\begin{align}\label{eq:MSE_PLMMSE}
&\varepsilon(F_{XYZ},\hat{X}_{\rm PLMMSE}) = \Tr\{\bGa_{XX}\}\nonumber\\
&-\Tr\left\{(\bGa_{XY}\!-\!\bGa_{\hat{X}_Z^{\rm NL}\!\hat{Y}_Z^{\rm NL}})(\bGa_{YY}\!-\!\bGa_{\hat{Y}_Z^{\rm NL}\!\hat{Y}_Z^{\rm NL}})^\dag(\bGa_{XY}\!-\!\bGa_{\hat{X}_Z^{\rm NL}\!\hat{Y}_Z^{\rm NL}})^T\right\}
\end{align}
for all $F_{XYZ}\in\A$. Therefore, in particular, \eqref{eq:MSE_PLMMSE} is also the worst-case MSE of $\hat{X}^{\rm PL}$ over $\A$. We next make use of the following lemma.

\begin{lemma}\label{le:ExistDist}
There exists a distribution $F^*_{XYZ}$ in the set $\A$ of distributions satisfying \eqref{eq:setA}, under which the PLMMSE estimate of $X$ based on $Y$ and $Z$ coincides with the MMSE estimate $\EE[X|Y,Z]$.
\end{lemma}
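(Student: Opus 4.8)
The plan is to build the least-favorable distribution $F^*_{XYZ}$ explicitly, by letting $X$ equal the PLMMSE estimator plus independent noise. First I would note that, by \eqref{eq:XhatExplicit}, the estimator $\hat{X}^{\rm PL}$ is a fixed Borel function of $(Y,Z)$ determined solely by the constrained data $\bGa_{XX}$, $\bGa_{XY}$, $g$ and $F_{YZ}$ of \eqref{eq:setA}; consequently it is one and the same function for every law in $\A$, and since the $(Y,Z)$-marginal is also fixed across $\A$, the matrix $\Cov(\hat{X}^{\rm PL})$ is common to all members of $\A$. I would then define $F^*_{XYZ}$ as follows: let $(Y,Z)$ follow the prescribed marginal $F_{YZ}$, let $E$ be a zero-mean random vector independent of $(Y,Z)$ (say $E\sim\N(0,\bGa_{EE})$) with $\bGa_{EE}=\bGa_{XX}-\Cov(\hat{X}^{\rm PL})$, and set $X=\hat{X}^{\rm PL}+E$. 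Under this law, $\EE_{F^*}[X|Y,Z]=\hat{X}^{\rm PL}+\EE[E]=\hat{X}^{\rm PL}$, which is precisely the desired coincidence of the MMSE and PLMMSE estimates --- provided that $\bGa_{EE}$ is a legitimate covariance matrix and that $F^*\in\A$.

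The step I expect to be the crux is showing that $\bGa_{EE}\succeq0$, i.e., that the PLMMSE estimator's covariance is dominated by $\bGa_{XX}$. Here I would use the fact that $\A$ is nonempty (the constraints in \eqref{eq:setA} arise from a genuine joint law): fix any $F^0_{XYZ}\in\A$. As established in the proof of Theorem~\ref{thm:SPL}, under $F^0$ the estimator $\hat{X}^{\rm PL}$ is the orthogonal projection of $X$ onto the closed subspace $\mathcal{B}$ of partially linear estimators, so the error $X-\hat{X}^{\rm PL}$ is uncorrelated with $\hat{X}^{\rm PL}\in\mathcal{B}$. This yields the variance decomposition $\bGa_{XX}=\Cov(\hat{X}^{\rm PL})+\Cov(X-\hat{X}^{\rm PL})$, whence $\bGa_{XX}-\Cov(\hat{X}^{\rm PL})=\Cov(X-\hat{X}^{\rm PL})\succeq0$. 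Because $\Cov(\hat{X}^{\rm PL})$ is the same for all members of $\A$, this positive-semidefiniteness conclusion is independent of the choice of $F^0$, so $\bGa_{EE}$ is indeed a valid covariance.

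It then remains to verify $F^*\in\A$, which I expect to be routine using the structure $\hat{X}^{\rm PL}=\bGa_{X\tilde{Y}}\bGa_{\tilde{Y}\tilde{Y}}^\dag\tilde{Y}+g(Z)$ of \eqref{eq:Xhat} together with $\EE[\tilde{Y}|Z]=0$ and the fact that $\tilde{Y}$ is uncorrelated with every function of $Z$. The $(Y,Z)$-marginal equals $F_{YZ}$ by construction. Since $E$ is zero-mean and independent of $Z$, $\EE[X|Z]=\EE[\hat{X}^{\rm PL}|Z]=g(Z)$. Because $E$ is uncorrelated with $Y$, $\Cov(X,Y)=\Cov(\hat{X}^{\rm PL},Y)$, and a short computation --- reducing $\Cov(\tilde{Y},Y)$ to $\bGa_{\tilde{Y}\tilde{Y}}$ and $\Cov(g(Z),Y)$ to $\bGa_{\hat{X}_Z^{\rm NL}\hat{Y}_Z^{\rm NL}}$, and invoking the identity $\bGa_{X\tilde{Y}}\bGa_{\tilde{Y}\tilde{Y}}^\dag\bGa_{\tilde{Y}\tilde{Y}}=\bGa_{X\tilde{Y}}$ \cite[Lemma 2]{TH02} --- gives $\Cov(X,Y)=\bGa_{X\tilde{Y}}+\bGa_{\hat{X}_Z^{\rm NL}\hat{Y}_Z^{\rm NL}}=\bGa_{XY}$. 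Finally, independence of $E$ from $(Y,Z)$ gives $\Cov(X)=\Cov(\hat{X}^{\rm PL})+\bGa_{EE}=\bGa_{XX}$. With all four constraints of \eqref{eq:setA} met, $F^*\in\A$, and since $\EE_{F^*}[X|Y,Z]=\hat{X}^{\rm PL}$ the lemma follows.
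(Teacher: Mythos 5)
Your proposal is correct and is essentially the paper's own proof: since $\hat{X}^{\rm PL}=\bGa_{X\tilde{Y}}\bGa_{\tilde{Y}\tilde{Y}}^\dag\tilde{Y}+g(Z)$ and $\Cov(\hat{X}^{\rm PL})=\Cov(g(Z))+\bGa_{X\tilde{Y}}\bGa_{\tilde{Y}\tilde{Y}}^\dag\bGa_{\tilde{Y}X}$, your construction $X=\hat{X}^{\rm PL}+E$ with $\bGa_{EE}=\bGa_{XX}-\Cov(\hat{X}^{\rm PL})$ coincides with the paper's $X=\bGa_{X\tilde{Y}}\bGa_{\tilde{Y}\tilde{Y}}^\dag\tilde{Y}+g(Z)+U$ with $\bGa_{UU}=\bGa_{XX}-\Cov(g(Z))-\bGa_{X\tilde{Y}}\bGa_{\tilde{Y}\tilde{Y}}^\dag\bGa_{\tilde{Y}X}$, and the subsequent verification of the constraints and of $\EE[X|Y,Z]=\hat{X}^{\rm PL}$ proceeds identically. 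The only cosmetic difference is the justification that the noise covariance is positive semidefinite: you use the projection-based variance decomposition under an arbitrary member of $\A$, while the paper identifies the same matrix as the error covariance of the LMMSE estimate of $X-\EE[X|Z]$ from $\tilde{Y}$ --- the same fact stated in different words.
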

\begin{proof}
See Appendix~\ref{sec:ProofLemmaExistDist}.
\end{proof}

Now, any estimator $\hat{X}$ that is a function of $Y$ and $Z$ satisfies
\begin{align}
\sup_{F_{XYZ}\in\A}\varepsilon(F_{XYZ},\hat{X}) &\geq \varepsilon(F^*_{XYZ},\hat{X}) \nonumber\\
&\geq \min_{\hat{X}}\varepsilon(F^*_{XYZ},\hat{X}) \nonumber\\
&=\varepsilon(F^*_{XYZ},\EE[X|Y,Z]) \nonumber\\
&=\varepsilon(F^*_{XYZ},\hat{X}^{\rm PL}) \nonumber\\
&=\max_{F_{XYZ}\in\A}\varepsilon(F_{XYZ},\hat{X}^{\rm PL}),
\end{align}
where the first line follows from the fact that $F^*_{XYZ}\in\A$, the third line is a result of the fact that the MMSE and PLMMSE estimators coincide under $F^*_{XYZ}$, and the last line is due to the fact that $\varepsilon(F_{XYZ},\hat{X}^{\rm PL})$ is constant as a function of $F_{XYZ}$ over $\A$. We have thus established that the worst-case MSE of any estimator over $\A$ is greater or equal to the worst-case MSE of the PLMMSE solution over $\A$, proving that $\hat{X}^{\rm PL}$ is minimax optimal.

\section{Proof of Lemma~\ref{le:ExistDist}}
\label{sec:ProofLemmaExistDist}
We prove the statement by construction.
Let $Y$ and $Z$ be two RVs distributed according to $F_{YZ}$ and denote $h(Z)=\EE[Y|Z]$ and $\tilde{Y}=Y-h(Z)$. Let $U$ be a zero-mean RV, statistically independent of the pair $(\tilde{Y},Z)$, whose covariance matrix is
\begin{align}
\bGa_{UU} &= \bGa_{XX} - \Cov(g(Z)) - \bGa_{X\tilde{Y}}\bGa_{\tilde{Y}\tilde{Y}}^{\dag}\bGa_{\tilde{Y}X}. \label{eq:VproofMinimax}
\end{align}
It can be easily verified that this is the covariance matrix of the estimation error of the LMMSE estimate of $\tilde{X}=X-\EE[X|Z]$ based on $\tilde{Y}=Y-\EE[Y|Z]$. Therefore, this is a valid covariance matrix.
Consider the RV\footnote{Recall that $\bGa_{X\tilde{Y}}$ and $\bGa_{\tilde{Y}\tilde{Y}}$ are functions of $\Cov(X,Y)$ and $F_{YZ}$, which are given.}
\begin{align}
X &= \bGa_{X\tilde{Y}}\bGa_{\tilde{Y}\tilde{Y}}^{\dag} \tilde{Y} + g(Z) + U. \label{eq:XproofMinimax}
\end{align}

We will show that the so constructed $X$, $Y$ and $Z$ satisfy the constraints \eqref{eq:setA}. Indeed, using the fact that $U$ has zero mean and is statistically independent of $Z$, we find that the conditional expectation of $X$ of \eqref{eq:XproofMinimax} given $Z$ is
\begin{align}\label{eq:EXZ}
\EE[X|Z] &= g(Z).
\end{align}
Furthermore, since $\tilde{Y}$, $g(Z)$ and $U$ are pairwise uncorrelated, the covariance of $X$ of \eqref{eq:XproofMinimax} can be computed as
\begin{align}\label{eq:CovXX}
\Cov(X) &= \bGa_{X\tilde{Y}}\bGa_{\tilde{Y}\tilde{Y}}^{\dag}\bGa_{\tilde{Y}\tilde{Y}}\bGa_{\tilde{Y}\tilde{Y}}^{\dag}\bGa_{\tilde{Y}X} + \Cov(g(Z))+\bGa_{UU}\nonumber\\
&= \bGa_{X\tilde{Y}}\bGa_{\tilde{Y}\tilde{Y}}^{\dag}\bGa_{\tilde{Y}X} + \Cov(g(Z)) \nonumber\\
&\hspace{0.4cm} + \bGa_{XX}- \Cov(g(Z))- \bGa_{X\tilde{Y}}\bGa_{\tilde{Y}\tilde{Y}}^{\dag}\bGa_{\tilde{Y}X}\nonumber\\
&= \bGa_{XX},
\end{align}
where we substituted \eqref{eq:VproofMinimax}. Finally, the cross covariance of $X$ of \eqref{eq:XproofMinimax} and $Y$ is given by
\begin{align}\label{eq:CovXY}
\Cov(X,Y) &= \bGa_{X\tilde{Y}}\bGa_{\tilde{Y}\tilde{Y}}^{\dag}\bGa_{\tilde{Y}Y} + \Cov(g(Z),h(Z)) \nonumber\\
&= \bGa_{X\tilde{Y}}\bGa_{\tilde{Y}\tilde{Y}}^{\dag}\bGa_{\tilde{Y}\tilde{Y}} + \Cov(g(Z),h(Z)) \nonumber\\
&= \bGa_{X\tilde{Y}} + \Cov(g(Z),h(Z)) \nonumber\\
&= \bGa_{XY} - \Cov(g(Z),h(Z)) + \Cov(g(Z),h(Z))\nonumber\\
&= \bGa_{XY},
\end{align}
where the second equality follows from the third line of \eqref{eq:Ga_WW}, the third equality follows from \cite[Lemma~2]{TH02}, and the fourth equality follows from \eqref{eq:Ga_XW}. Equations \eqref{eq:EXZ}, \eqref{eq:CovXX} and \eqref{eq:CovXY} demonstrate that the distribution $F_{XYZ}^*$ associated with $X$, $Y$ and $Z$, belongs to the family of distributions $\A$ satisfying \eqref{eq:setA}.

Next, we show that the PLMMSE and MMSE estimators coincide under $F_{XYZ}^*$. Indeed, since $U$ is statistically independent of the pair $(\tilde{Y},Z)$, we have that $\EE[U|\tilde{Y},Z]=\EE[U]=0$, so that
\begin{align}
\EE[X|Y,Z] &= \bGa_{X\tilde{Y}}\bGa_{\tilde{Y}\tilde{Y}}^{\dag} \EE[\tilde{Y}|Y,Z] + \EE[g(Z)+U|Y,Z] \nonumber\\
&= \bGa_{X\tilde{Y}}\bGa_{\tilde{Y}\tilde{Y}}^{\dag} (Y-h(Z)) + g(Z) + \EE[U|\tilde{Y},Z] \nonumber\\
&= \bGa_{X\tilde{Y}}\bGa_{\tilde{Y}\tilde{Y}}^{\dag} (Y-h(Z)) + g(Z),
\end{align}
where we used the fact that there is a one-to-one transformation between the pair $(Y,Z)$ and the pair $(\tilde{Y},Z)$.
This expression is partially linear in $Y$, implying that this is also the PLMMSE estimator in this setting. Thus, for the distribution $F^*_{XYZ}$, the PLMMSE estimator is optimal not only among all partially linear functions, but also among \emph{all} functions of $Y$ and $Z$.

\bibliographystyle{IEEEtran}


\end{document}